\documentclass[10pt,letterpaper]{article}
\usepackage{cite}
\usepackage{amsmath,amssymb,amsfonts}
\usepackage{algorithmic}
\usepackage{graphicx}
\usepackage{mathrsfs}
\usepackage{textcomp}
\def\BibTeX{{\rm B\kern-.02em{\sc i\kern-.025em b}\kern-.08em
    T\kern-.1667em\lower.7ex\hbox{E}\kern-.125emX}}
 \usepackage{epsfig}
 \usepackage[latin1]{inputenc}
\usepackage{graphicx}      
\usepackage{amsmath,amssymb}
\usepackage{pstricks,pst-node,pst-text,pst-3d,multido,pst-plot,pst-coil}
\usepackage[normalem]{ulem}
\usepackage{hyperref}

\usepackage{caption}
\usepackage{float}
\usepackage{subfloat}
\usepackage{psfrag}
\usepackage{multicol}

\newtheorem{remark}{Remark}
\newtheorem{proposition}{Proposition}

\newtheorem{assumption}{Assumption}
\newtheorem{lemma}{Lemma}

\newcommand{\col}{ \mbox{col} }

\def\calb{{\cal B}}

\def\calf{{\cal F}}
\def\cala{{\cal A}}

\def\calo{{\cal O}}

\def\lef[{\left[\begin{array}}
\def\rig]{\end{array}\right]}
\def\qed{\hfill$\Box \Box \Box$}

\def\rea{\mathbb{R}}

\def\rea{\mathbb{R}}
\def\bfe{{\bf e}}

\def\begequ{\begin{equation}}
\def\endequ{\end{equation}}
\def\lab{\label}
\def\begite{\begin{itemize}}
\def\endite{\end{itemize}}
\def\begarr{\begin{array}}
\def\endarr{\end{array}}
\def\begequarr{\begin{eqnarray}}
\def\endequarr{\end{eqnarray}}
\def\diag{\mbox{diag}}

\def\intnum{\mathbb{Z}}

\newcommand{\bfq}{\bf q}

\def\L2{{\cal L}_2}
\def\L2e{{\cal L}_{2e}}

\def\bul{\noindent $\bullet\; $}

\def\rea{\mathbb{R}}

\def\intnum{\mathbb{Z}}

\def\diag{\mbox{diag}}

\def\diag{\mbox{diag}}

\def\col{\mbox{col}}

\def\et{\varepsilon_t}
\def\diag{\mbox{diag}}

\def\begmat#1{\begin{bmatrix}#1\end{bmatrix}}
\def\begali#1{\begin{align}{#1}\end{align}}
\def\begalis#1{\begin{align*}{#1}\end{align*}}
\def\begsubequ{\begin{subequations}}
\def\endsubequ{\end{subequations}}
\def\begequarr{\begin{eqnarray}}
\def\endequarr{\end{eqnarray}}
\def\begequarrs{\begin{eqnarray*}}
\def\endequarrs{\end{eqnarray*}}
\def\begarr{\begin{array}}
\def\endarr{\end{array}}
\def\begequ{\begin{equation}}
\def\endequ{\end{equation}}
\def\lab{\label}
\def\begdes{\begin{description}}
\def\enddes{\end{description}}
\def\begenu{\begin{enumerate}}
\def\begite{\begin{itemize}}
\def\endite{\end{itemize}}
\def\endenu{\end{enumerate}}

\def\lef[{\left[\begin{array}}
\def\rig]{\end{array}\right]}
\def\qed{\hfill$\Box \Box \Box$}
\def\begcen{\begin{center}}
\def\endcen{\end{center}}
\def\begrem{\begin{remark}\rm}
\def\endrem{\end{remark}}
\def\begassums{\begin{assumption*}}
\def\endassums{\end{assumption*}}
\def\begassu{\begin{ass}}
\def\endassu{\end{ass}}
\def\beglem{\begin{lemma}}
\def\endlem{\end{lemma}}
\def\begcor{\begin{corollary}}
\def\endcor{\end{corollary}}
\def\begfac{\begin{fact}}
\def\endfac{\end{fact}}
\def\begass{\begin{assumption}}
\def\endass{\end{assumption}}

\def\begmat#1{\begin{bmatrix}#1\end{bmatrix}}
\def\begali#1{\begin{align}{#1}\end{align}}
\def\begalis#1{\begin{align*}{#1}\end{align*}}
\def\TAC{{\it IEEE Trans. Automatic Control}}

\def\TPE{{\it IEEE Trans. Power Electronics}}

\def\IJC{{\it International Journal of Control}}

\def\SCL{{\it Systems and Control Letters}}
\def\AUT{{\it Automatica}}

\def\TCS{{\it IEEE Trans. on Circuits and Systems}}

\usepackage{color}

\def\begsubequ{\begin{subequations}}
	\def\endsubequ{\end{subequations}}
\def\begpro{\begin{proposition}}
	\def\endpro{\end{proposition}}
\def\beglem{\begin{lemma}}
	\def\endlem{\end{lemma}}
\def\begass{\begin{assumption}}
	\def\endass{\end{assumption}}
\def\begproo{\begin{proof}}
	\def\endproo{\end{proof}}
\def\begfac{\begin{fact}}
\def\endfac{\end{fact}}
\newtheorem{corollary}{Corollary}
\usepackage{subcaption}
\begin{document}

\title{An Algebraic State Observer  for a Class of Physical Systems}
\author{Alexey Bobtsov\thanks{A. Bobtsov and A. Pyrkin are with the Control Systems and Robotics Department, ITMO, St. Petesburg, Russia (e-mail: bobtsov\{a.pyrkin\}@itmo.ru).}
 , Jose Guadalupe Romero$^\dagger$,  Romeo Ortega\thanks{J.G. Romero and R. Ortega  are with  the Department of Electrical and Electronic Engineering,  ITAM, R\'io Hondo 1, Mexico City, 01080, Mexico (e-mail: \{jose.romerovelaquez\}\{romeo.ortega\}@itam.mx). }
 and Anton Pyrkin$^*$
}
\maketitle
\begin{abstract}
In this paper we present a radically new approach to design state observers for nonlinear systems, with particular emphasis on physical ones. Our objective is to obtain an {\em algebraic} relation between the unmeasurable part of the state and filtered versions of the systems inputs and outputs, which holds true {\em for all $t \geq 0$}. The latter qualifier should be contrasted with the usual {\em asymptotic} (or fixed/finite time) objective. The standing assumption for our design is the availability---or possibility of constructing, via coordinate change---state components with {\em measurable derivatives}. In the physical systems studied in the paper this condition is naturally satisfied. The next step in the design is the application of the {\em Swapping Lemma} to pull out from the dynamics the derivative of one of these signals. The design is completed replacing the latter by the measurable signals and arranging the remaining terms. The algebraic observer constitutes a refreshing major departure from classical asymptotic observer designs, even in the case of electrical motors and mechanical systems that have been exhaustively studied. Particularly notable is the fact that no {\em observability} or {\em excitation} condition is imposed for the construction of the algebraic observer. 
\end{abstract}

\noindent{\bf \large Keywords:} Observer theory, nonlinear control, physical systems, swapping lemma. 

%%%%%%%%%%%%%%%%
\section{Introduction and Problem Formulation}
\lab{sec1}
%%%%%%%%%%%%%%%%
The problem of reconstruction of the state of a dynamical system out of the measurement of its inputs and outputs is one of the central problems of control engineering practice intensively investigated by the control community. Many monographs and research papers have been devoted to this problem with the successful identification of particular classes of nonlinear systems for which constructive solutions are available. This classes include systems in affine-in-the-state form and systems with triangular structure---the interested reader is referred to the recent book \cite{BERbook} and the tutorial paper \cite{BERANDAST} for a review of the recent developments on this topic.  One feature of all existing solutions to this problem is that the reconstruction of the state vector is achieved {\em asymptotically}, and in some recent designs---that rely on the practically undesirable injection of high-gain---in {\em finite/fixed time}, see {\em e.g.} \cite{EFIPOL} for a recent tutorial on this topic. Unfortunately, most of the physical systems of practical interest, including mechanical and electromechanical systems, do not fit into the two aforementioned systems structures. Moreover, due to inevitable presence of noise, they are very sensitive to the injection of high gain.
   
We consider in the paper nonlinear systems described in the standard $(f,g,h)$ form \cite{ISIbook}, that is:
\begequ
\lab{fgh}
\begin{cases}
\dot x =f(x)+g(x)u\\
y=h(x),
\end{cases}
\endequ
with $x(t) \in \rea^n$, $u(t) \in \rea^m$ and $y(t) \in \rea^p$. In this paper we take a radically different approach to the state observation problem---our objective being to obtain an {\em algebraic} relation between a part of the state that is {\em unmeasurable}, denoted $x_u(t) \in \rea^{n_u},\;n_u \leq n$, and {\em filtered} versions of the systems inputs and outputs, with this identity holding true {\em for all $t \geq 0$}.  That is, we want to find a {\em mapping} $\calo: \rea^{2m+2p} \to \rea^{n_u}$ such that the following identity holds true\footnote{Since the filters time constants cand be chosen arbitrarily small, we neglect its transient in the identity \eqref{staide}.}  
\begequ
\lab{staide}
x_u(t)=\calo[u(t),y(t),\calf[u](t),\calf[y](t)],\;\forall t \geq 0,
\endequ
where $\calf(p) \in \rea(p)$ is a linear time-invariant (LTI) filter. Notice that, in contrast to the standard definition of an observer, we want the identity \eqref{staide} to remain valid {\em for all $t \geq 0$}, and not {\em asymptotically} as $t \to \infty$. On the other hand, we also require that the identity \eqref{staide} holds true {\em for all} initial conditions and input signals---attaching in this way the qualifier {\em global} to the result.

The main standing assumption for the solution of this problem is the existence of some of the state variables, whose {\em time derivative is measurable}. Taking into account, that if such states are not available, we can create them via coordinate change that, as expected, involves the solution of two partial differential equations. Indeed, it is possible to replace the assumption of existence of elements of the vector $x$ whose derivative is measurable, by the following.

\begass
\lab{ass1}
There exists mappings 
$$
\varpi: \rea^n \to \rea^{n_m},\;w_f :\rea^p \to \rea^{n_m},\;w_g:\rea^p \to \rea^{n_m},
$$
such that the following equations are satisfied
\begalis{
\nabla^\top \varpi(x) f(x)&=w_f(h(x))\\
\nabla^\top \varpi(x) g(x)&=w_g(h(x)).
}
\endass

It is clear that if {\em Assumption 1} holds we have that the signal $z_m:=\varpi(x)$ satisfies
$$
\dot z_m = w_f(y)+w_g(y)u,
$$
whose right hand side is {\em measurable}.

It should be noted that the scenario where some of the state variables have measurable derivatives is common in physical systems. For instance, systems with magnetic components satisfy Faraday's law $\dot \phi=v$, with $\phi$ the {\em unmeasurable} flux and $v$ the {\em measurable} voltage. The same is true for electrical systems where the derivative of the charge, which is an {\em unmeasurable} quantity, is the current that is {\em measurable}.\footnote{See the recent paper \cite{ORTetal_scl25} where this property is exploited to design ``classical" adaptive state observers for systems with quadratic nonlinearities.}
 
 Once these signals with known time derivatives have been identified (or constructed) we appeal to the powerful tool of the {\em Swapping Lemma}  \cite[Lemma 3.6.5]{SASBODbook}. This lemma, which played a central law in the solution of the model reference adaptive control (MRAC) problem in the 1980s, was reported by Morse in the groundbreaking paper \cite{MOR}. Using this lemma it is possible to---via LTI filtering---extract from the product of two signals a term depending on the {\em derivative} of one of them. In the scenario of MRAC this signal is the parameter estimation error, whose derivative was known to be square integrable, a feature that turned out to be instrumental to provide the first complete global stability proof of the MRAC problem in  \cite{MOR}. In our algebraic observer design scenario, this signal is then replaced by the measurable ones and rearranged in such a way as to create the algebraic relation $\calo$ required by the design. Given the importance of the Swapping Lemma in our observer design, we refer to it in the sequel as Algebraic Swapping Lemma Observer (ASLO).  A particularly notable feature of ASLO is the fact that no {\em observability} or {\em excitation} condition is imposed for its construction---which constitutes a major departure from {\em all existing} observer designs.
 
 It is clear that, due to the use of {\em filtered signals}, a transient time is implicit in all the derivations---which, as indicated in Footnote 1 above, is neglected in \eqref{staide}. This transient can be modified selecting the filters time constants---however, given the complicated combinations of the filtered signals in the ASLO, this is done in a rather indirect way. To provide a clearer handle on the assignment of a convergence rate to the observers, besides the ASLO we present an asymptotically convergent variation of it that incorporates a tuning gain that directly affects the convergence time.  This new asymptotic observer---named in the sequel {\em asymptotic ASLO} (A-ASLO) directly incorporates the information of the ASLO, hence it also constitutes a radically new observer design.
 
 The format of presentation we have adopted in the paper is the following: first, we start with a simple {\em motivational example} in Section \ref{sec2}, that---given the novelty of the design procedure---is worked out in full detail.  The performance of the ASLO and the A-ASLO are compared in simulations with the standard asymptotically convergent Luenberger observer. Then, in Section \ref{sec3} we illustrate the ASLO desing procedure with and in Section \ref{sec4} with {\em mechanical systems}. It is worth pointing out that for all these applications, that have been exhaustively studied in the literature, the reported ASLO gives radically new solutions---with no assumptions whatsoever on observability imposed for the design. Finally, in Section \ref{sec5} we present some concluding remarks. The key Swapping Lemma is recalled in Appendix \ref{appa} and the proofs of the main propositions are given in Appendices \ref{appb} to \ref{appe}.\\

%\noindent {\bf Caveat Emptor} The present paper is a shortened version of a full one reported in {\tt (arXiv xxx)}, where other physical examples---including the generalized electric machine---are presented.
% 
\noindent {\bf Notation} For $b \in \rea^n$ we denote the Euclidean norm as $|b|$. $\rea_+$ and $\intnum_+$ denote the positive real and integer numbers, respectively. For $q \in \intnum_+$ we define the set $\bar q:=\{1,2,\dots,q\}$. ${\bf I}_n$ is the $n \times n$ identity matrix and ${\bf 0}_{n \times s}$ is an $n \times s$ matrix of zeros, while $\bfe_j$ is the $j$-th element of the Euclidean basis. The action of an LTI filter $\calf(p) \in \rea(p)$ on a signal $w(t)$ is denoted as $\calf[w](t)$, where $p^n[w]:={d^n w(t)\over dt^n}$. To avoid cluttering, when clear from the context, the time argument is omitted. Given a function $H:  \rea^n \to \rea$, we define the differential operator $\nabla H(x):=\left(\frac{\displaystyle \partial H}{\displaystyle \partial x}\right)^\top $.
%%%%%%%%%%%
%%%%%%%%%%%%%%%
%
%%%%%%%%%%%%%%%%
\section{Motivational Example}
\lab{sec2}
%%%%%%%%%%%%%%%%
%
In this section we illustrate the procedure that we will follow to design the ASLO with a simple academic example. \vspace{-0.3cm}
\subsection{Problem formulation}
\lab{subsec21}
%%%%%%%%%%%%%%%%
%
Consider the second order LTI system 
\begequ
\lab{simsys}
	\dot x_1=x_2,\;\dot x_2=u, \;y=x_1.
\endequ
The objective is to design, using the measurements of $u$ and $y$, an ASLO for the state $x_2$. That is, we want to find a mapping $\calo: \rea^2 \to \rea$ such that the following identity holds true
\begequ
\lab{x2ide}
x_2(t)=\calo(u(t),y(t),\calf[u](t),\calf[y](t)),\;\forall t \geq 0,
\endequ
where $\calf(p) \in \rea(p)$. Notice that, in contrast to the standard definition of an observer, we want the identity \eqref{x2ide} to hold true {\em for all $t \geq 0$} and not {\em asymptotically} as $t \to \infty$. The property required above is also different from {\em convergence in finite (or fixed) time} where we assume the existence (or fix) a time $t_c>0$ such that  \eqref{x2ide} holds for all $t \geq t_c$. As will become clear below, and in the examples given in the sequel, the construction of our ASLO is radically different from the standard construction of existing observers.
\subsection{Design of the ASLO}
\lab{subsec22}
%%%%%%%%%%%%%%%%
% 
\begpro
\lab{pro1}
Consider the system \eqref{simsys}. The ASLO of the state $x_2$ is given by\vspace{-0.3cm}
\begequ
\lab{aslox2}
x_2(t)=p \calf[y](t)+{1 \over \lambda}\calf[u](t),\;\forall t \geq 0,
\endequ
where we defined the LTI filter
\begequ
\lab{fil}
\calf(p)={\lambda \over p + \lambda},
\endequ
with $\lambda>0$.
\endpro

{\bf Proof 1}
Using the obvious identity $\dot x_2=p[x_2]+\lambda x_2-\lambda x_2,$ which, manipulating the factor $(p+\lambda)$, we rearrange as
\begali{
\lab{calfx2}
\calf[x_2] & =x_2-{1 \over \lambda}\calf[\dot x_2]=x_2-{1 \over \lambda}\calf[u]
}
where we used \eqref{simsys} to get the second identity. Invoking again the system dynamics \eqref{simsys} we can write the left hand side of \eqref{calfx2} as\vspace{-0.3cm}
\begali{
\calf[x_2] &= \calf[\dot x_1]= p\calf[x_1] = p \calf[y].
\lab{ydel}
}
Replacing this identity in \eqref{calfx2} and rearranging terms we obtain our ASLO \eqref{aslox2}.

\subsection{A new asymptotic observer}
\lab{subsec23}
%%%%%%%%%%%%%%%%
%
Setting large values for the constant $\lambda$ we can reduce the time of convergence of the filter $\calf(p)$.\footnote{It should be underscored that we have chosen the basic first order filter \eqref{fil} to simplify the derivations. However, as the SL is applicable to {\em any LTI system}, we can select any other LTI system---but there is no clear advantage for doing that.} However, its effect on the convergence rate of the ASLO is done in a rather indirect way. To provide a clearer handle on the assignment of the observers convergence rate, we present below an A-ASLO, that is, an {\em asymptotically convergent} variation of ASLO that explicitly incorporates a tuning gain that directly affects the convergence time.

\begin{corollary}
\lab{cor1}
Consider the system \eqref{simsys} and the ASLO given in \eqref{aslox2}. Define the asymptotic ASLO
\begequ
\lab{asyobsx2}
\dot {\hat x}_2=u-\gamma\Big[\hat x_2 -\Big(p \calf[y]+{1 \over \lambda}\calf[u]\Big)\Big],
\endequ
where $\gamma>0$ is a tuning gain. Then, the observation error $\tilde x_2:=\hat x_2-x_2$ satisfies $\dot {\tilde x}_2 = -\gamma {\tilde x}_2.$
\end{corollary}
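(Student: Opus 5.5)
The proof is a direct substitution argument built on Proposition \ref{pro1}. The key observation is that the bracketed signal $p \calf[y]+{1 \over \lambda}\calf[u]$ in \eqref{asyobsx2} is, by the identity \eqref{aslox2}, exactly equal to $x_2(t)$ for all $t \geq 0$. We will therefore first rewrite the observer \eqref{asyobsx2} as
\begin{equation*}
\dot {\hat x}_2=u-\gamma\big(\hat x_2 - x_2\big)=u-\gamma \tilde x_2 .
\end{equation*}
Here we invoke Proposition \ref{pro1} and, as in the standing convention of Footnote 1, neglect the filter transient.

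The next step is to differentiate the error $\tilde x_2=\hat x_2-x_2$ along trajectories. Using the plant dynamics $\dot x_2=u$ from \eqref{simsys}, we get
\begin{equation*}
\dot{\tilde x}_2=\dot{\hat x}_2-\dot x_2=u-\gamma\tilde x_2-u=-\gamma\tilde x_2 .
\end{equation*}
The input term cancels exactly, which is why $u$ was included as the feedforward term in \eqref{asyobsx2}. Solving this linear equation gives $\tilde x_2(t)=e^{-\gamma t}\tilde x_2(0)$, so the convergence rate is set directly by $\gamma$, as claimed.

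The only delicate point is the first step: justifying that the ASLO identity may be substituted pointwise for all $t\ge 0$. Strictly, the filtered signals carry an exponentially decaying term with rate $\lambda$, coming from the filter initial conditions. Under the paper's convention this term is set to zero, and then the substitution is exact. If one prefers not to neglect it, the plan is to carry it as an additive signal $\epsilon(t)$ satisfying $|\epsilon(t)|\le c\,e^{-\lambda t}$. The error equation then becomes $\dot{\tilde x}_2=-\gamma\tilde x_2+\gamma\epsilon$. Since this is an exponentially stable linear system driven by an exponentially vanishing input, $\tilde x_2\to0$ exponentially, and with zero filter initialization the stated equation holds exactly.
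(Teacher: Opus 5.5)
Your main argument is the paper's own (implicit) proof: replace the bracket in \eqref{asyobsx2} by $x_2$ via \eqref{aslox2}, then cancel $u$ using $\dot x_2=u$. This is the same one-line substitution the paper uses for claim \textbf{P2} of Proposition~\ref{pro2} in Appendix~\ref{appb}, and under the convention of Footnote~1 it is complete.

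Your closing sentence, however, is false: initializing the filters at zero does \emph{not} remove the transient. Use the realization \eqref{aslorea} and define the mismatch $\epsilon:=p\calf[y]+\frac{1}{\lambda}\calf[u]-x_2=\frac{1}{\lambda}w_1+w_2+\lambda y-x_2$. It satisfies $\dot\epsilon=-\lambda\epsilon$ identically, because the $u$ terms cancel, so
\[
\epsilon(t)=e^{-\lambda t}\Big(\tfrac{1}{\lambda}w_1(0)+w_2(0)+\lambda x_1(0)-x_2(0)\Big).
\]
For $w_1(0)=w_2(0)=0$ this is $e^{-\lambda t}\big(\lambda x_1(0)-x_2(0)\big)$, which is nonzero in general. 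Equivalently, with zero filter states the two operator identities used in Proof~1 fail by exponentially decaying terms. The identity $\calf[\dot x_1]=p\calf[x_1]$ is off by $\lambda e^{-\lambda t}x_1(0)$, and $\calf[x_2]+\frac{1}{\lambda}\calf[\dot x_2]=x_2$ is off by $e^{-\lambda t}x_2(0)$.

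Making $\epsilon\equiv 0$ would require initializing the filters as a function of the unknown $x_2(0)$, that is, knowing the quantity being estimated. This is why the paper \emph{neglects} the term (Footnote~1, Observation~\textbf{O1}) rather than eliminating it. The exact identity $\dot{\tilde x}_2=-\gamma\tilde x_2$ therefore holds only under that convention. Without it, the correct statement is the one you derived just before: $\dot{\tilde x}_2=-\gamma\tilde x_2+\gamma\epsilon$, which gives exponential convergence of $\tilde x_2$ but not the exact error equation. Drop the final clause and the proposal is fine.
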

\subsection{Discussion}
\lab{subsec24}
%%%%%%%%%%%%%%%%
%
We make the following observations regarding the derivations above.

\noindent {\bf O1} The claim that the identity \eqref{aslox2} {\em holds true} for all $t \geq 0$ should be understood taking into account that the action of the filters {\em is not} instantaneous. This fact is clearly seen if you consider in the example \eqref{simsys} the case of $u(t) \equiv 0,\forall t\geq 0$. In this case, which clearly corresponds to the case where $x_2(t)=x_2(0),\forall t\geq 0$, some simple calculations prove that the right hand side of \eqref{aslox2} yields instead $(1+e^{-\lambda t})x_2(0)$.   

\noindent  {\bf O2} It is interesting to compare the ASLO observer \eqref{aslox2} with the A-ASLO  \eqref{asyobsx2} and the classical reduced-order Luenberger observer \cite{RUGbook}. Towards this end, we obtain the  state-space realization of the three of them as:

\noindent {\bf ASLO \eqref{aslox2}}
 \begali{
\nonumber
      \begmat{{\dot  w}_1 \\ {\dot  w}_2} & =\begmat{-\lambda & 0 \\ 0 & -\lambda }\begmat{w_1 \\ {w}_2}+\begmat{\lambda & 0 \\ 0 & -\lambda^2}\begmat{u \\ y}\\
	 x_2 & =\begmat{\frac{1}{\lambda} & 1}\begmat{w_1 \\ {w}_2} + \lambda y.
\lab{aslorea}
}  

\noindent {\bf A-ASLO \eqref{asyobsx2}} 
\begequ
\lab{staspaasyobsx2}
\begmat{\dot w_1 \\ \dot w_2\\ \dot {\hat x}_2} =\begmat{-\lambda & 0 & 0\\ 0 & -\lambda & 0 \\ {\gamma \over \lambda} & \gamma & -\gamma}\begmat{ w_1 \\ w_2 \\ \hat x_2}+\begmat{\lambda & 0 \\ 0 & -\lambda^2 \\ 1 & \gamma \lambda }\begmat{u \\ y}.
\endequ

\noindent {\bf Reduced Order Luenberger Observer} \cite[Theorem 15.7]{RUGbook}
\begali{
\nonumber
\dot x_c & =-\gamma_L x_c+u-\gamma_L^2 y\\
\lab{lueobs}
\hat x_2 &=x_c+ \gamma_L y,
}  
with $\gamma_L>0$, whose error dynamics is also of the form $\dot {\tilde x}_2=-\gamma_L \tilde x_2$ as for the A-ASLO.

Comparing \eqref{aslorea} with \eqref{staspaasyobsx2} we see that---besides the presence of a new gain $\gamma$---the only difference is the addition of the state $\hat x_2$ which brings along a $u$-dependent term. Hence, it is expected that the latter will be more sensible to input noise, a fact that is verified in the simulations presented below. On the other hand, there is  {\em absolutely no resemblance} of neither one of these two observers with the Luenberger one rendering very difficult the comparison between them.  

\noindent  {\bf O3} It is possible to extend the result of Proposition \ref{pro1} for the case of {\em $n$-th order} systems. To avoid cluttering the notation, we illustrate the procedure with the case of a fourth order integrator,\footnote{With respect to the example of Proposition \ref{pro1}, we take for simplicity $\lambda=1$ in the filter $\calf(p)$ \eqref{fil}.} which can be easily extended for the $n$-th order case. That is, we consider:
$$
\dot x_1=x_2,\;\dot x_2=x_3,\;\dot x_3=x_4,\;\dot x_4=u,\;y=x_1,
$$
and our objective is to design an ASLO for $(x_2,x_3,x_4)$. From the equations above, the following identity is easily verified
$$ 
(p+1)^3x_4=p^3y+u+(p+1)u+(p+1)^2u,
$$
that, multiplying by ${1 \over (p+1)^3}$, can be rewritten as
\begequ
\lab{x4} 
x_4=w^y_3+w^u_3+w^u_2+w^u_2,
\endequ
where we defined the filtered signals
\begalis{
w^y_i &:={p^i \over (p+1)^i}y,\;i=1,2,3,\;w^u_i :={1 \over (p+1)^i}u,\;i=1,2,3.
}
Similarly, it is easy to verify that
\begali{
x_2&={p \over p+1}y+x_3-{1 \over p+1}x_4\;=w^y_1+x_3-x_4+w^u_1.
\lab{x2}
}
Proceeding in the same manner with $x_3$ we get
\begali{
\lab{x3}
x_3&=w^y_2+2x_4-w^u_2.
}
Replacing \eqref{x4} in \eqref{x3}, and this in turn in \eqref{x2}, completes the design of the ASLO.

\noindent  {\bf O4} It is interesting to note that the ASLO is {\em robust  vis-\`a-vis constant disturbances} at the input and output. Let us illustrate this fact with the second order example \eqref{simsys}. First, assume the {\em output} is perturbed by a constant disturbance $\delta$, that is, $y=x_1+\delta$, the state estimates of the ASLO are {\em insensitive} to the presence of this unknown disturbance. This fact is clear if in \eqref{ydel} we evaluate $p\calf[x_1-\delta]$ and see that---as $p[\delta]=0$---this is still equal to $p \calf[y]$. It is easy to prove that this robustness property holds true for the $n$-th order case system considered above. Second, suppose now that it is the input signal that is perturbed by an additive constant, that is $u \leftarrow u+\delta$. Replacing this in \eqref{calfx2} results in 
$$
x_2(t)=p\calf[y]+{1 \over \lambda}\calf[u+\delta]=p\calf[y]+{1 \over \lambda}\calf[u]+\et,
$$ 
where $\et \to 0$ exponentially fast---recovering asymptotically the true value of $x_2$. From \eqref{lueobs} it is clear that such a disturbance induces an unstable behavior in the Luenberger observer. 
\subsection{Simulation results}
\lab{subsec25}
%%%%%%%%%%%%%%%%
% 
In this subsection we evaluate the performance of the ASLO \eqref{aslox2}, the A-ASLO \eqref{asyobsx2} and reduced order Luenberger observer \eqref{lueobs} for different filter and tuning gains, and the presence of input measurement noise as well as dynamic uncertainty.

The simulations were carried out using the input signal   $u=\cos(0.02t)$. The initial conditions for the second-order system were set as  $x(0)=\col(0,0.1)$. For the three observers all initial conditions were set to zero in all simulations. For the three observers, we present the observer error $\tilde x_2$. 

To evaluate the effect of the {\em tuning parameters} on the transient performance, different values of  $\lambda$   were considered for the ASLO. As expected, as  $\lambda$ increases, the observer error $\tilde x_2$ converges faster---without any overshoot---as shown in Fig. \ref{aslo}. For the same values of 
$\lambda$ the A-ASLO was simulated with  $\gamma=2$. The corresponding transient response is shown in Fig. \ref{nao}, where asymptotic convergence is observed, with improved convergence rate as $\lambda$ increases. As expected, with increased values of $\gamma$ the convergence time of A-ASLO decreases as shown in in Fig. \ref{diffg}. For the Luenberger observer, simulations were conducted with $\gamma_L=\lambda$. The transient responses for different values of  $\gamma_L$ are presented in Fig. \ref{luen}. Surprisingly, and very hard to explain, the observer error dynamics of the ASLO and the Luenberger observer are {\it almost identical}, a behavior that was also observed in the robustness tests. 

Simulations were also conducted by adding {\em white noise} to the input signal, and fixing $\lambda=3$, for ASLO, $\gamma=2$ for A-ASLO, and $\gamma_L=3$ for the Luenberger observer. Fig. \ref{ruiu} shows that both observer errors remain near zero, with a slightly better response seen from ASLO. The transient behavior of the Luenberger observer is not presented, but it exhibits performance very similar to that of the ASLO.

Finally, the observers are tested in the presence of {\em dynamic parasitic disturbances} in the input signal, which are described by
 $$
 \tau \dot u = -u +v,
 $$
where $v$ is the input signal used by the observers. The simulations were conducted for different values of $\tau$  with $u(0)=0$,  and the same observer gains: $\lambda=3$ for ASLO, $\gamma=2$ for the A-ASLO and $\gamma_L=3$ for the Luenberger observer. Fig. \ref{diftau} shows the transient behavior of the observer errors for the ASLO and A-ASLO designs under different values of $\tau$. Both observers achieve satisfactory performance for small values of $\tau$, with ASLO systematically ensuring a better behavior. Moreover, for slower parasitic dynamics, the error of A-ASLO exhibits a drift away from zero. Similarly to the case of input noise, the behavior of the Luenberger observer is very similar to the one of ASLO design, therefore it is omitted.

\begin{figure*}[htp!]
	\centering
	\subcaptionbox{\label{aslo} ASLO for different $\lambda$ }{\includegraphics[width=0.34\textwidth]{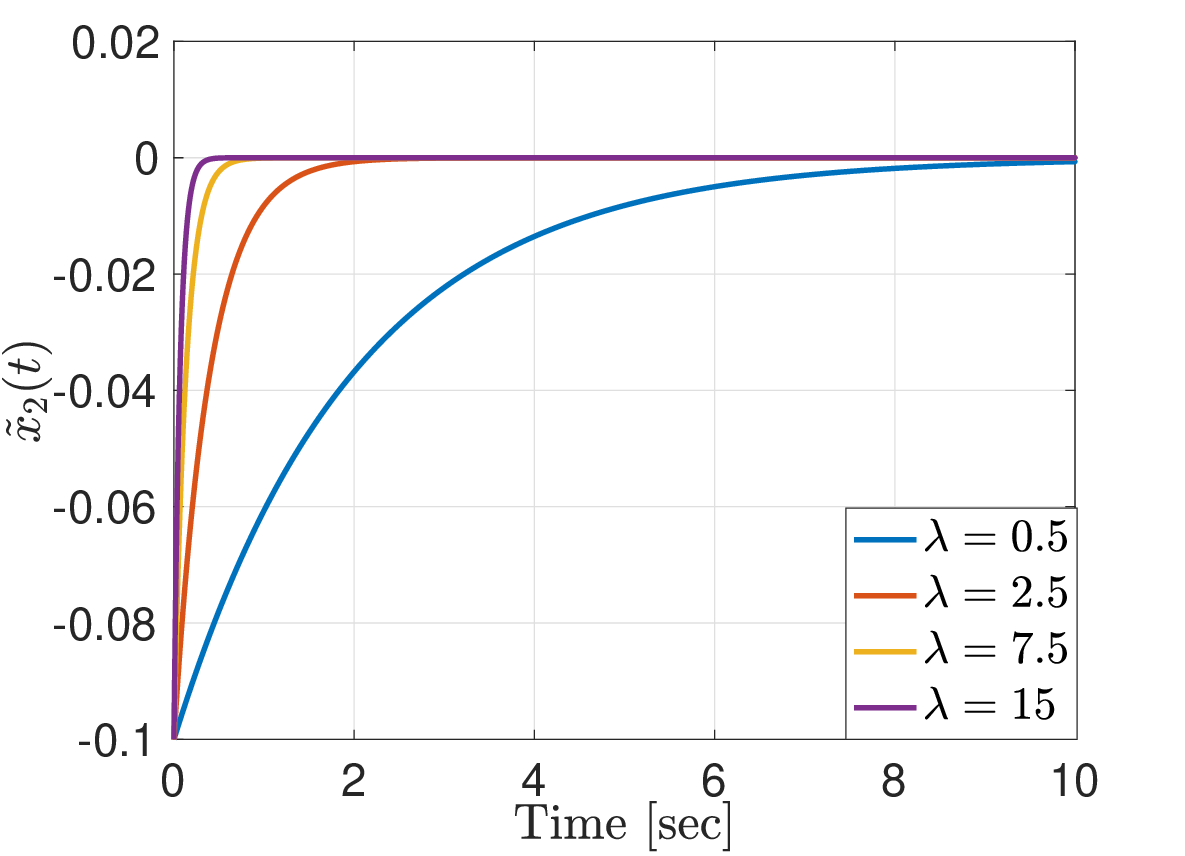}}
	\
	\subcaptionbox{\label{nao} A-ASLO with $\gamma=2$ and different $\lambda$}{\includegraphics[width=0.34\textwidth]{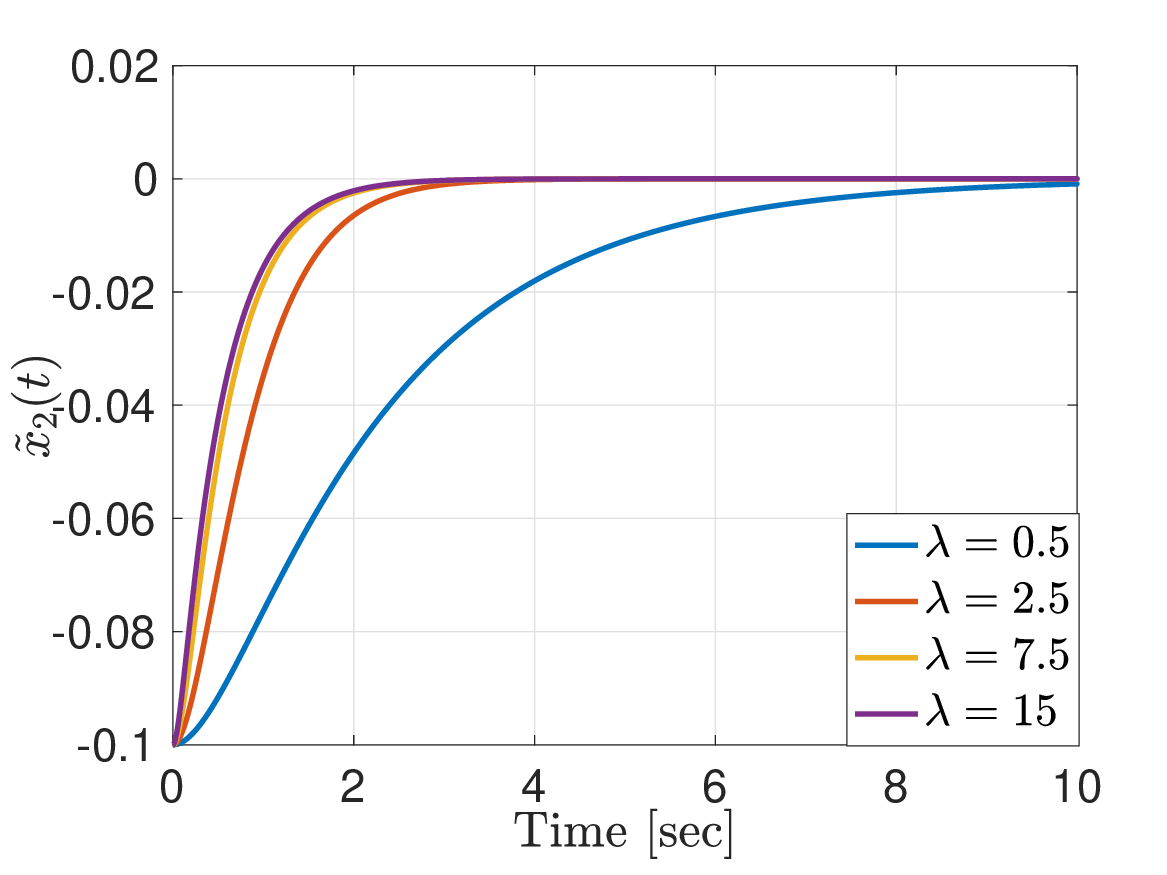}}
	\
		\subcaptionbox{\label{diffg} A-ASLO with $\lambda=2.5$ and different $\gamma$ }{\includegraphics[width=0.34\textwidth]{figs/aslo}}
	\
	\subcaptionbox{\label{luen} Luenberger observer with different $\gamma_L$}{\includegraphics[width=0.34\textwidth]{figs/asob}}
	\caption{\label{fig_1} Transient behavior of $\tilde x_2 $}
\end{figure*}

\begin{figure}[htp!]
    \centering
    \includegraphics[width=0.6\textwidth]{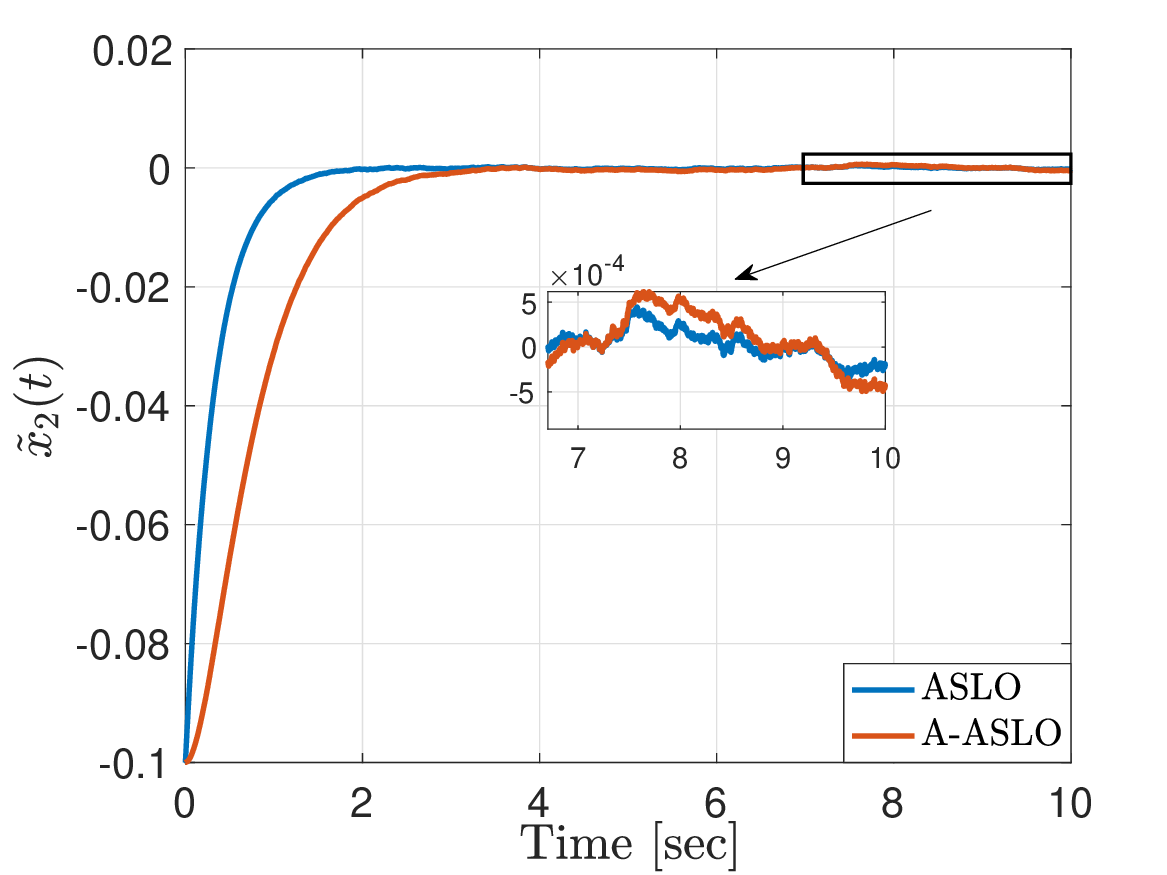}
    \caption{Transient behavior of $\tilde x_2 $ for ASLO and A-ASLO  with white noise added to $u$ } 
    \label{ruiu}
\end{figure}
 
\begin{figure}[htp!]
    \centering
    \includegraphics[width=0.8\textwidth]{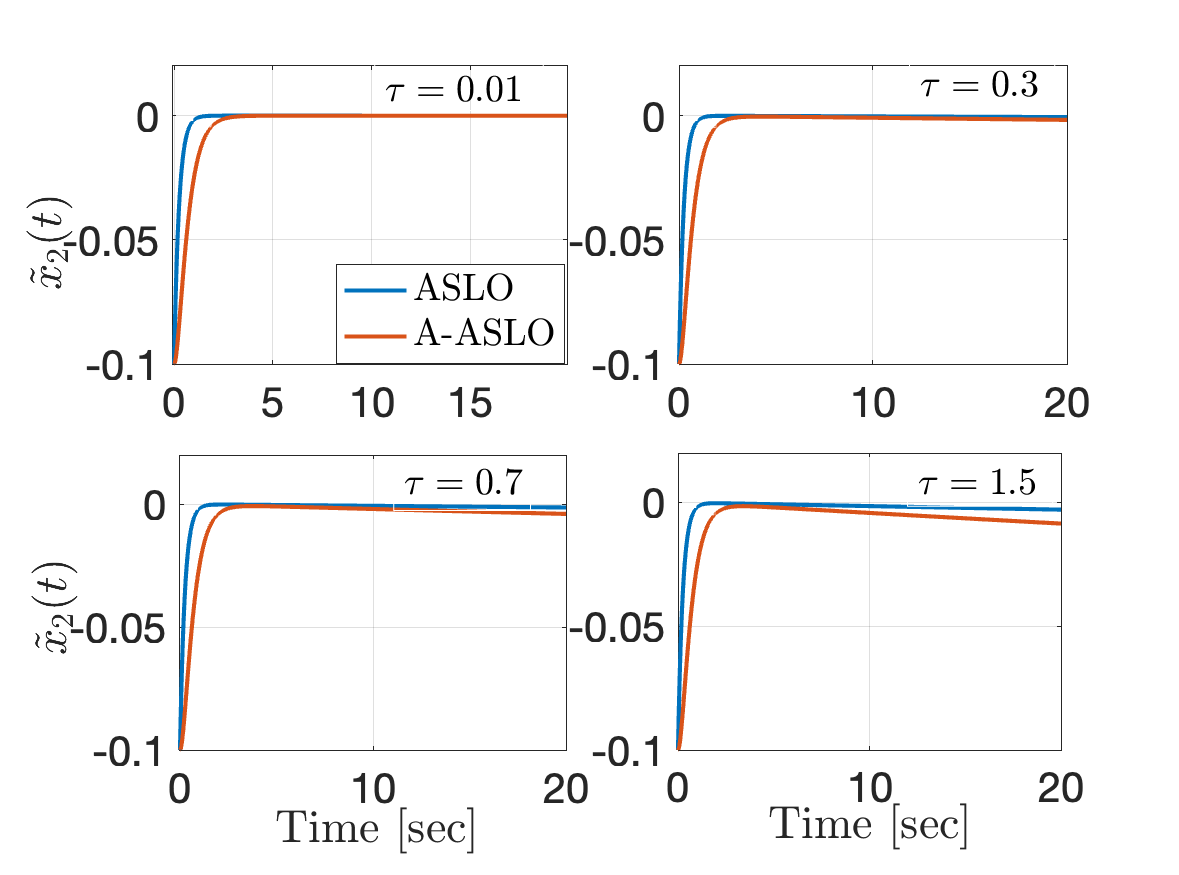}
    \caption{Transient behavior of $\tilde x_2 $ for ASLO and A-ASLO with input unmodelled dynamics with different time constant $\tau$ } 
    \label{diftau}
\end{figure}
%%%%%%%%%%%%%%%%%
%%%%%%%%%%%%%%%%
\section{Electric Motor Examples}
\lab{sec3}
%%%%%%%%%%%%%%%%
%
In this section we present the examples of two electric motors which admit the design of an ASLO as formulated in Section \ref{sec1}. Also, after reviewing the model of the generalized electrical motor, we discuss its potential extension to other classes of motors. We concentrate on the problem of observation of the {\em flux}. It is worth pointing out that, since it is possible to reconstruct the rotor position from the flux \cite[Eq. (9)]{LEEetal}, this is the crucial problem to be solved in the practically relevant scenario of {\em sensorless control} \cite{KRAbook,MARTOMVERbook,NAMbook}---that has been intensively studied by the control community. As will become clear below the mechanical dynamics, which is given here for completeness, plays no role on the design of the ASLO---this situation is similar to the design of other classes of flux observers, {\em e.g.}, \cite{BERbook,BOBPYRORT,CHOetal,LEEetal,MARTOMVERbook,PYRetal,YIetal}.

To streamline the presentation of the results, throughout the paper we introduce {\em measurable signals}---which are algebraic combinations of the inputs and outputs---that we identify with the letter $z_i$ and combinations of filtered measurable signals denoted $w_i$.  \vspace{-0.7cm}
%
%%%%%%%%%%%%%%%%
\subsection{Surface-mount permanent magnet synchronous motor (PMSM)}
\lab{subsec31}
%%%%%%%%%%%%%%%%
In this subsection we design an ASLO for the flux of the PMSM measuring only voltages and currents. The solution we will present here is {\em radically different} from all existing solutions reported in the literature. Assessing its performance opens a new, wide research avenue in this fundamental topic.

%%%
\subsubsection{Mathematical model}
\lab{subsec311}
%%%%%%%%%%%%%%%5
%
The classical, two-phase $\alpha\beta$ model of the non-salient PMSM is given by \cite[eqs. (1)-(3)]{BOBPYRORT} \vspace{-0.3cm}
\begin{eqnarray}
\nonumber 
\dot  \phi & = & -R y+ u\\
\nonumber
\dot  \theta &= & \omega \\
J\dot  \omega &=& -{R_m}  \omega + {n_p} (y_2  \phi_1 - y_1  \phi_2) -{\tau_L},
\label{sys}
\end{eqnarray}
$\phi(t) \in \rea^2$ is the stator flux vector, $\theta(t) \in [0,2\pi)$ is the rotor angle, $ \omega(t) \in \rea$ the rotor angular velocity, $y(t) \in \rea^2$ is the {stator current} and $u(t) \in \rea^2$ is the applied voltage.\footnote{Notice that the first two equations of the systems model \eqref{sys} are Faraday's Law.} The constant $R$ is the stator resistance, $J$ is the rotor moment of inertia, $R_m$ the friction coefficient, $n_p$ the number of pole pairs and $\tau_L(t) \in \rea$ is the load torque. In the practically relevant scenario of {\em sensorless control} \cite{CHOetal}, the only measurable quantity is the {\em stator current}, which is related to the systems state via \cite[eqs. (2) and (3)]{BOBPYRORT}: \vspace{-0.4cm}
\begequ
\lab{flucur}
L y=\phi-{\lambda_m} \begmat{\cos(n_p \theta) \\ \sin(n_p \theta)},
\endequ
with $L$ the stator inductance and $\lambda_m$ the flux of the permanent magnet.
%   
%%%%%%%%%%%%%%%%%
\subsubsection{Main Result}
\lab{subsec312}
%%%%%%%%%%%%%%
%
In this section we present the ASLO for the stator flux of the PMSM and give an A-ASLO. The proof of the proposition is given in Appendix \ref{appb}. 

\begpro 
\lab{pro2}
Consider the dynamics of the PMSM system given in \eqref{sys}. Define the following {\em measurable} signals \vspace{-0.15cm}
\begali{
z_1&:=- R y_1+ u_1,\;z_2:=- R y_2+ u_2 \nonumber\\
z_3&:=-\frac{\lambda_m^2}{2L} + \frac{L}{2}(y_1^2 +y_2^2).
\lab{z}
}
and introduce the {\em dynamic extension}
\begalis{
w_4&={-{p \over \lambda}\calf[y_1]}+{1 \over \lambda L}\calf[z_1],  \;w_5={-{p \over \lambda}\calf[y_2]}+{1 \over  \lambda L}\calf[z_2]\\
w_6&=  {1 \over \lambda }\calf[z_1\calf[y_1]]+{1 \over \lambda }\calf[z_2\calf[y_2]]+{1 \over \lambda L}\calf[z_1{1 \over \lambda}\calf[z_1]]+ \\
&\;\;{1 \over \lambda L}\calf[z_2{1 \over \lambda}\calf[z_2]]-{{p \over \lambda}\calf[z_3]} \\
w_7&=\calf[w_4],\;w_8=\calf[w_5],\\
w_9&=\calf[w_6]+{1 \over \lambda}\calf[z_1\calf[w_4]] +{1 \over \lambda}\calf[z_2\calf[w_5]],
}
with $\calf(p)$ the LTI filter defined in \eqref{fil}. Also, define the signal 
\begequ
\lab{del}
\Delta:=w_4w_8-w_5w_7, 
\endequ
which we assume is different to zero for all $t \geq 0$.\footnote{If there exists $t_c \in \rea_+$ such that the generic condition is violated, standard modifications---{\em e.g.} temporary freeze the estimation or add a jump \cite{SASBODbook}---can be incorporated to avoid the singularity.}
 
\noindent {\bf P1} The ASLO of $\phi$ is given by
\begin{equation}
\phi(t)={1 \over \Delta(t)} \begmat{w_6(t) w_8(t)-w_5(t)w_9(t)\\ w_4(t) w_9(t)-w_6(t)w_7(t) }.
\label{aslophipm}
\end{equation}

\noindent {\bf P2} Define the A-ASLO flux observer 
\begali{
\nonumber
	\dot {\hat \phi}_1 &=z_1-\gamma\hat \phi_1+ \gamma {1 \over \Delta}(w_6 w_8-w_5w_9)\\
\lab{asyobsphi}
	\dot  {\hat \phi}_2 &=z_2-\gamma \hat \phi_2+ \gamma {1 \over \Delta}(w_4 w_9-w_6w_7),
}
where $\gamma>0$ is a tuning gain. Then, the observation error $\tilde \phi:=\hat \phi-\phi$ satisfies 
\begin{equation}
\label{errequphi} 
\dot {\tilde \phi} = -\gamma {\tilde \phi}.
\end{equation}
\endpro
%%%%%%%%%%%%%%%%%%%%%%%%%%%%%%%%
\subsubsection{Simulation results}
\lab{subsec313}
%%%%%%%%%%%%%%%%%%%%%%%%%%%%%%
%
In this section we present simulation results of the flux ASLO and A-ASLO of Proposition \ref{pro2}. As indicated above this flux observation problem has attracted a lot of attention in the control community, therefore we compare the performance of three other  observers, namely:

\noindent {\bf FO1:} The observer reported in \cite{ABAO2Automatica}, which is given by the state equations\footnote{For $i,j \in \intnum_+,\;j>i$, we use the notation $\xi_{ij}:=\col(\xi_i,\xi_{i+1},\dots,\xi_j)$.}
\begin{align}
\dot \xi_{14} &=  \lef[{c} v \\ i \rig],\; \xi_{14}(0) \in \rea^4 \nonumber \\
\dot \xi_{5} & =   -\lambda \left( \xi_{5} +|\bfq|^2\right),\;\xi_{5}(0)\in \rea \nonumber \\
\dot \xi_{67} & =   -\lambda \left( \xi_{67} - 2 \bfq \right),\;\xi_{67}(0)\in \rea^2 \nonumber \\
\dot{\hat\eta} & =  \gamma \Omega (y - \Omega^\top \hat\eta),\;\hat \eta(0) \in \rea^2,
\label{flux_Automatica}
\end{align}
with the algebraic relations
\begin{align*}
\bfq & =  \xi_{12} - R \xi_{34} - Li,\;y   =   -\lambda  \left( |\bfq|^2 +  \xi_{5} \right),\\
\Omega  &=   \lambda  \left(2 \bfq -  \xi_{67} \right),\;\hat\phi = Li+\bfq+\hat\eta,
\end{align*}
where $\lambda>0$ is a filter parameter, $\gamma>0$ is the adaptation gain.

\noindent {\bf FO2:}  The observer reported in  \cite{ORTetal}, given by 
\begin{equation}
	\dot{\hat\phi}=v-Ri-\frac{\gamma}{2}{\partial h(\hat\phi,i)\over \partial {\hat\phi}}\max\{0,h(\hat\phi,i)\},
	\label{flux_Praly}
\end{equation}
where $h(\hat\phi,i):=|\hat\phi-Li|^2-\lambda_m^2$, with $\gamma>0$ the adaptation gain. It should be underscored that, in contrast with the observer proposed in this paper, the observer {\bf FO2} assumes that, besides the electrical parameters $R$ and $L$, the magnet flux constant $\lambda_m$ is also known.

\noindent {\bf FO3:}  A variation of the ASLO based on the estimation-based approach of \cite{PYRetal} given by\vspace{-0.2cm}
\begin{align}
	\dot {\hat \phi}_1 &=z_1+\gamma\Delta(w_6 w_8-w_5w_9-\Delta \hat \phi_1)\nonumber \\
	\dot  {\hat \phi}_2 &=z_2+\gamma\Delta(w_4 w_9-w_6w_7-\Delta \hat \phi_2),
	\label{asyobsphi2}
\end{align}
where $\gamma>0$ is tuning gain. Clearly, the observer satisfies $ \dot {\tilde \phi} = -\gamma \Delta^2 {\tilde \phi},$ which should be contrasted with the error dynamics of A-ASLO \eqref{errequphi}. Another interesting feature of this variation of A-ASLO is that the division by the determinant $\Delta$, that may cross through singularity, is {\em avoided}. 

In Table  \ref{tabla1}  we list the parameters of the PMSM used in the simulations, which were taken from \cite{data}. In all scenarios we study the behavior of the observers for the motor driven by a controller with the desired speed  $\omega_{ref}(t)=2+0.5\sin (t)$ [rad/s] and  load torque $\tau_L=3+0.5\sin(0.1 t)$ [Nm]. {\em For all} observers we selected the same tuning parameters $\lambda=5$ and $\gamma=5$. The regulation of the speed, as well as its measurement, was the industry standard PI control with a PLL for the reconstruction of the speed as reported in \cite[Section 2.B and 2.C]{LEEetal}.

Figs. \ref{d}-\ref{a_aslo2} show that all observer errors converge to zero as predicted by the theory. The convergence rate of all observers is significantly different, with the {\em fastest rate}---and smaller overshoot---achieved by the ASLO, followed closely by the ASLO-based observer {\bf FO3}. To improve the behavior of the observer {\bf FO2} it was necessary to take very large values of $\gamma$---of the order of $1000$.

\begin{table}[!hbt]
\begin{center}
\begin{tabular}{|l|c|}
\hline
Inductance $L(mH)$ & 40.03 \\
Resistance $R(\Omega)$ & 8.875 \\
Drive inertia $J(kgm^2)$ &  $60\times 10^{-6}$ \\
Pairs of poles $n_p (-)$ & 5 \\
Magnetic flux $\lambda_m(Wb)$ &  0.2086 \\
Maximum current $I(A)$ & 2.3 \\
\hline
\end{tabular}
\caption{Parameters of the motor BMP0701F used in the simulations}
\label{tabla1}
\end{center}
\end{table}
%%%%%%%%%%%%%%

\begin{figure*}[htp!]
	\centering
	\subcaptionbox{\label{d} Flux estimation error ASLO \eqref{aslophipm} }{\includegraphics[width=0.34\textwidth]{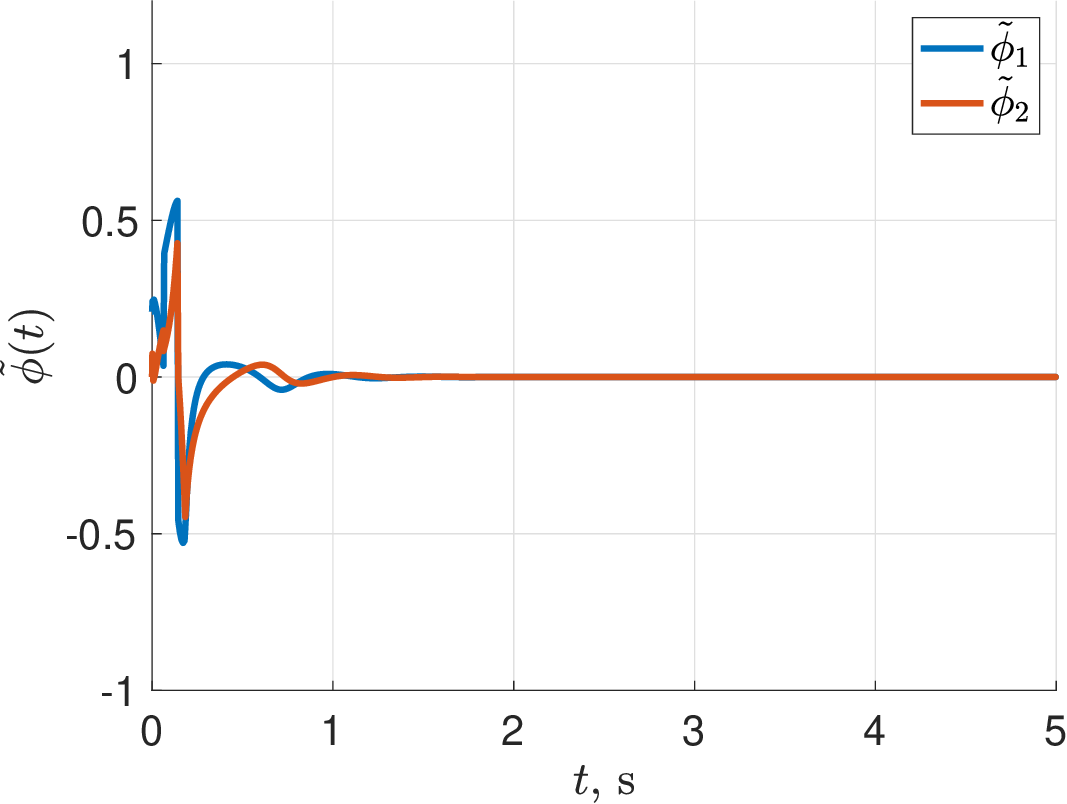}}
	\
	\subcaptionbox{\label{a_aslo} Flux estimation error A-ASLO \eqref{asyobsphi}}{\includegraphics[width=0.34\textwidth]{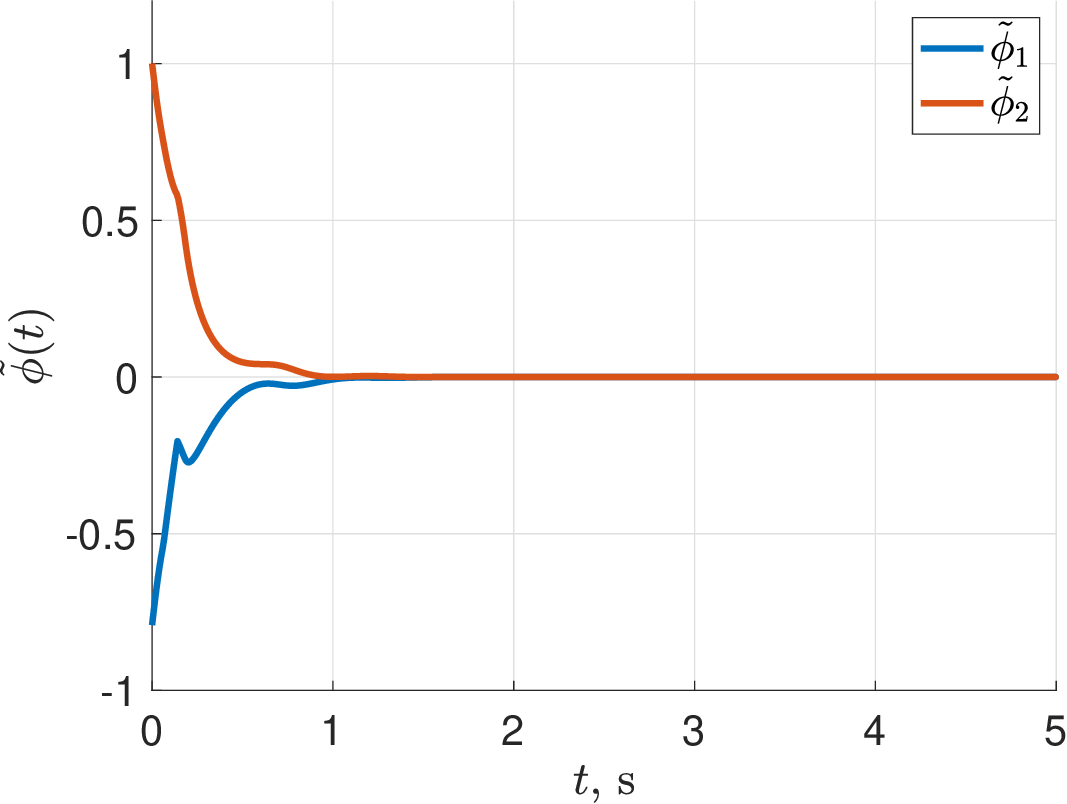}}
	\
	\subcaptionbox{\label{abao2} Flux estimation error for {\bf FO1} \eqref{flux_Automatica} }{\includegraphics[width=0.3\textwidth]{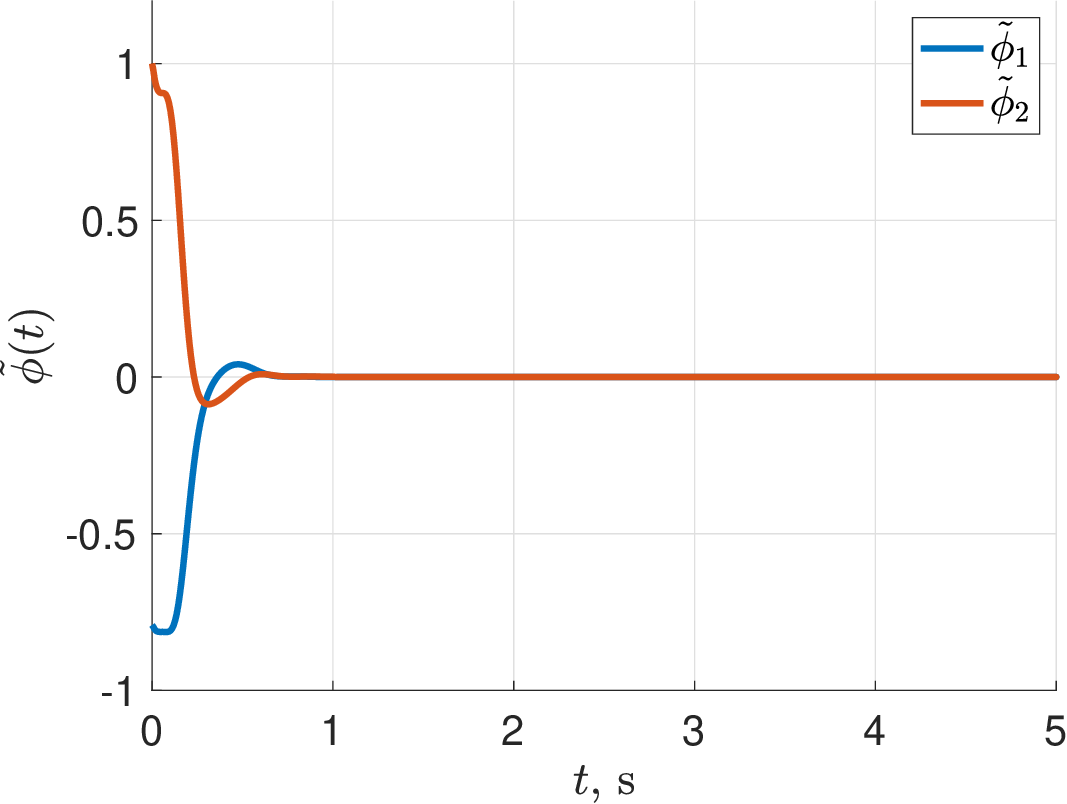}}
	\
	\subcaptionbox{\label{praly} Flux estimation error for {\bf FO2} \eqref{flux_Praly}}{\includegraphics[width=0.3\textwidth]{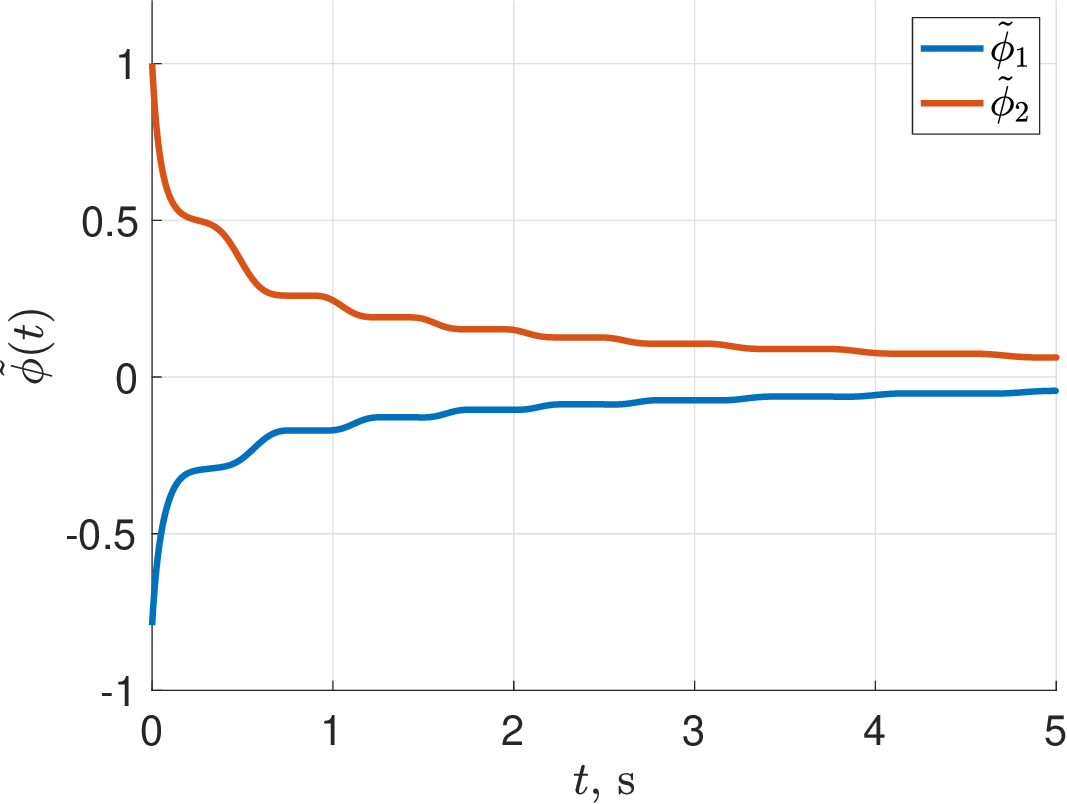}}
		\
	\subcaptionbox{\label{a_aslo2} Flux estimation error for {\bf FO3} \eqref{asyobsphi2}}{\includegraphics[width=0.3\textwidth]{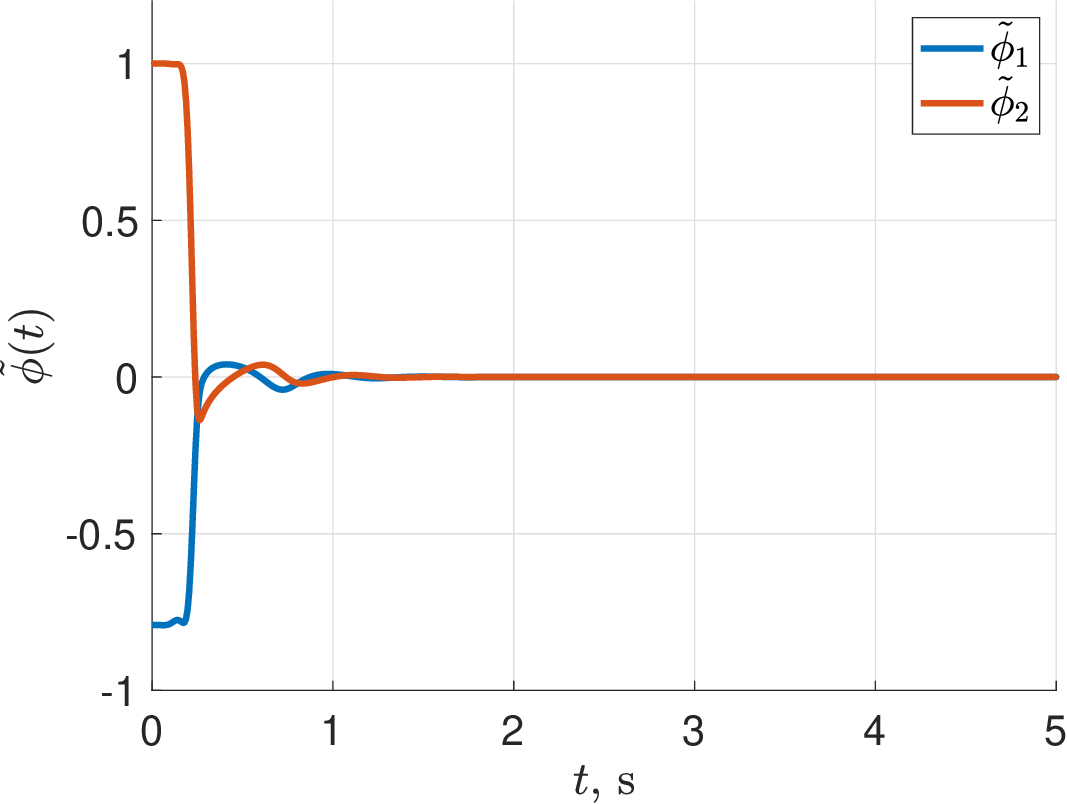}}
		\caption{\label{fig_2} Flux observation errors for the case of time-varying desired rotor speed}
\end{figure*}
%
%%%%%%%%%%%%%%%%
\subsection{Wound Rotor Induction Motor (WRIM)}
\lab{subsec32}
%%%%%%%%%%%%%%%%
%
In this section we design an ASLO for the flux of the WRIM. We make the important observation that in WRIM---in contrast to the more popular squirrel-cage IM---the rotor windings are connected to slip rings, hence the {\em rotor current is measurable}. Although WRIM are much less popular than squirrel-cage IM they are still used in high-power industrial systems like cranes and mine hoists, ball and grinding mills, conveyors and large pumps, hence the interest to design an ASLO for it.
%%%
\subsubsection{Mathematical model}
\lab{subsec321}
%%%%%%%%%%%%%%%5
%
The $\alpha\beta$ model describing the behavior of the IM is given by \cite[eqs. (10.1)-(10.4)]{ORTetal_bookel}
\begali{
\label{dotlam}
\dot  \phi & = -\begmat{ R_s {\bf I}_2 &  {\bf 0}_2 \\ {\bf 0}_2 & R_r {\bf I}_2 }i + \begmat{{\bf I}_2   \\ {\bf 0}_2 }u\\
\nonumber
\dot \theta &= \omega \nonumber \\
\nonumber
\dot \omega &=  - {R_m \over J} \omega +{1 \over J} i_s^\top{\mathcal J}  \phi_s - {1 \over J}\tau_L
}
where  $ \phi(t) =[  \phi_s^\top(t) \;  \phi_r^\top(t)]^\top\in\rea^4$ is the flux vector,  $i(t)= [i_s^\top(t) \; i_r^\top(t) ]^\top$ is the current vector, $u(t) \in \rea^2$ is the stator voltage,  $\theta(t) \in [0,2\pi)$ is the rotor angle and $ \omega(t) \in \rea$ is the rotor angular velocity. The constants $R_r$ and $R_s$ are the rotor and stator resistances, respectively, $J$ is the rotor moment of inertia, $R_m$ the friction coefficient, $\tau_L$ is the load torque, and  $\tau(t) \in \rea$ is the electrical torque, where we defined the rotation matrix
$$
{\mathcal J}:= \left[\begin{array}{cc} 0 & -1 \\ 1 & 0 \end{array} \right].
$$

Under a linear magnetic assumption, fluxes and currents are related through the $4 \times 4$ inductance matrix of the windings as \cite[eq. (10.5)]{ORTetal_bookel}
\begin{equation}
\label{lambda}
 \phi=\left [\begin{array}{cc} L_s {\bf I}_2  & L_{sr} {\bf e}^{{\mathcal J} \theta}\\
L_{sr} {\bf e}^{-{\mathcal J} \theta} & L_r {\bf I}_2 
 \end{array} \right] i,
\end{equation}
where $L_r$, $L_s$ are self-inductances, $L_{sr}$ is the mutual inductance, with the rotation matrix
$$
{\bf e}^{{\mathcal J}\theta}= \left[\begin{array}{cc} \cos(\theta) &  -\sin(\theta) \\ \sin(\theta) & \cos(\theta)\end{array}\right].
$$

Similarly to the case of the PMSM the task is to design an ASLO for the flux assuming measurable only the current and the voltage vectors. Hence, we define the output signals $y(t) \in \rea^4$ as\footnote{We recall the important observation made above that we assume the rotor currents are {\em measurable}---as it is the case in WRIM.}
\begequ
\lab{ywrim}
y=\begmat{y_s \\ y_r}:=\begmat{i_s \\ i_r}.
\endequ
To simplify the notation we introduce the following convention: we use the subindex $\kappa$ to denote either a stator or a rotor signal. That is, we denote $(\cdot)_\kappa,\;\kappa\in\{s,r\}$.
%   
%%%%%%%%%%%%%%%%%
\subsubsection{Main Result}
\lab{subsec322}
%%%%%%%%%%%%%%
%
\begpro 
\lab{pro3}
Consider the dynamics of the IM system given in \eqref{dotlam}  and the filter \eqref{fil}. Define the following {\em measurable} signals \vspace{-0.2cm}
\begalis{
z_{s_i}&:=- R_s i_{s_i}+ u_i, \;z_{r_i}:=- R_r i_{r_i},\;i=1,2,
}
and introduce the dynamic extension  \vspace{-0.2cm}
\begalis{
w_{\kappa_4}& ={1 \over \lambda L_s}\calf[z_{\kappa_1}] -{p \over \lambda}\calf[y_{\kappa_1}]  \\
w_{\kappa_5}&= {1 \over  \lambda L_s}\calf[z_{\kappa_2}] - {p \over \lambda}\calf[y_{\kappa_2}]\\
w_{\kappa_6}&=  {1 \over \lambda }\calf[z_{\kappa_1}\calf[y_{\kappa_1}]]+{1 \over \lambda }\calf[z_{\kappa_2}\calf[y_{\kappa_2}]]+ {1 \over \lambda L_s}\calf[z_{\kappa_1}{1 \over \lambda}\calf[z_{\kappa_1}]] \\
&\;\;+{1 \over \lambda L_s}\calf[z_{\kappa_2}{1 \over \lambda}\calf[z_{\kappa_2}]] -{p \over \lambda}\calf[z_{\kappa_3}] \\
w_{\kappa_7}&=\calf[w_{\kappa_4}]\\
w_{\kappa_8}&=\calf[w_{\kappa_5}]\\
w_{\kappa_9}&=\calf[w_{\kappa_6}]+{1 \over \lambda}\calf[z_{\kappa_1}\calf[w_{\kappa_4}]] +{1 \over \lambda}\calf[z_{\kappa_2}\calf[w_{\kappa_5}]],
}

Define the signals 
\begalis{
\Delta_\kappa:=w_{\kappa_4}w_{\kappa_8}-w_{\kappa_5}w_{\kappa_7},
}
which we assume are different to zero for all $t \geq 0$.
 
\noindent {\bf P1} The ASLO of $\phi_\kappa$ is given by
\begequ
\lab{www}
\phi_\kappa(t)={1 \over \Delta_\kappa(t)} \begmat{w_{\kappa_6}(t) w_{\kappa_8}(t)-w_{\kappa_5}(t)w_{\kappa_9}(t)\\ w_{\kappa_4}(t) w_{\kappa_9}(t)-w_{\kappa_6}(t)w_{\kappa_7}(t) }.
\endequ
\noindent {\bf P2} Define the A-ASLO 
\begin{align}
	\dot {\hat \phi}_{\kappa_1} &=z_{\kappa_1}-\gamma_{\kappa}\hat \phi_{\kappa_1}+ \gamma_\kappa {1 \over \Delta_{\kappa}}(w_{\kappa_6} w_{\kappa_8}-w_{\kappa_5}w_{\kappa_9}) \nonumber\\
	\dot  {\hat \phi}_{\kappa_2} &=z_{\kappa_2}-\gamma_\kappa \hat \phi_{\kappa_2}+ \gamma_\kappa {1 \over \Delta_\kappa}(w_{\kappa_4} w_{\kappa_9}-w_{\kappa_6}w_{\kappa_7}),
	\lab{asyobsphir}
\end{align}
where $\gamma_\kappa >0$ is a tuning gain. Then, the observation errors $\tilde \phi_\kappa:=\hat \phi_\kappa-\phi_\kappa$ satisfy
$\dot {\tilde \phi}_\kappa = -\gamma_\kappa {\tilde \phi}_\kappa$.
\endpro

%%%%%%%%%%%%%%%%
\subsection{Discussion}
\lab{subsec323}
%%%%%%%%%%%%%%%%
%
We make the following observations regarding the application of ASLO to electrical machines.

\noindent {\bf O5} A model for the generalized electric motor (given in Euler-Lagrange (EL) form) may be found in  \cite[eqs. (9.6)-(9.8)]{ORTetal_bookel}, see also \cite{LIUetal} where many practical examples are discussed in detail. The electrical part may be expressed in terms of the flux $\phi(t) \in \rea^{n_e}$ and the current $y(t) \in \rea^{n_e}$, with $n_e \in \intnum_+$ the number of windings (in stator and rotor), as follows: 
\begalis{
\dot \phi&=-Ry+Gu\\
\phi&=L(\theta)y+\mu(\theta),
}
with $\theta(t) \in [0,2\pi)$ the rotor angle and $G\in \rea^{n_e \times m}$ a constant input matrix. The first equation is, clearly, Faraday's law.  From the derivations given in the proofs of Propositions \ref{pro2} and \ref{pro3}, we see that there are two key steps for the design of an ASLO:
\begin{itemize}
\item [{\bf S1}] The establishment of {\em linear algebraic relations} of the form \eqref{algsys}.
\item [{\bf S2}] The requirement that the matrices $\cala(t)$ and $\calb(t)$ should be {\em measurable}.
\end{itemize}
Towards this end, we exploited the structure of the mappings $L(\theta)$ and $\mu(\theta)$ to, in both examples, eliminate from the algebraic relations the dependence on $\theta$. Moreover, in the IM example we needed to restrict ourselves to the case of wounded rotor IM because---to fulfill the requirement of step {\bf S2}---it was necessary to assume measurement of the rotor current, which is unavailable in squirrel cage IM.  

\noindent {\bf O6}  There are many practical motor examples that fulfill the requirements above---{\em e.g.}, the PM stepper motor \cite[Example 9.9]{ORTetal_bookel}.

\noindent{\bf O7}  Although it is clear from \eqref{flucur} that, with the knowledge of the flux $\phi$, the rotor angle in the PMSM can be computed  via
$$
\theta={1 \over n_p}\arctan\Big\{ {Ly_2 - \phi_2 \over Ly_1 - \phi_1}\Big\},
$$
it is interesting to note that it is possible to design an ASLO {\em directly for $\theta$}. The key steps are reported in \cite{PYRetal} where an adaptive observer based on the Dynamic Regressor Equation and Mixing parameter estimation procedure \cite{ARAetal_tac17} is proposed. The key step is the derivation of the algebraic relation  \cite[eq. (28)]{PYRetal}, which is {\em linear} in the vector $\col(\cos(n_p \theta),\sin(n_p \theta))$ (denoted $\chi$ in  \cite{PYRetal}). Although this relation is scalar, it is possible to generate another scalar relation via application of the SL---as it is done here. However, in  \cite{PYRetal} the interesting variation of deriving the second relation using a {\em different filter} in the SL is explored. 

\noindent {\bf O8} One particularly important motor, which has a very wide range of applications, is the {\em interior} PMSM \cite{NAMbook}. For this motor, due to the presence of higher-order harmonics, we have that the inductance matrix $L(\theta)$ takes the complicated form:
$$
L(\theta)={L_d + L_q \over 2} {\bf I}_2+  {L_d - L_q \over 2}\left[\begin{array}{cc} \cos(2\theta) &  \sin(2\theta) \\ \sin(2\theta) & -\cos(2\theta)\end{array}\right],
$$  
where $L_d,L_q$ are the $dq$ stator inductances---hence the construction indicated in observation {\bf O5} does not seem feasible. In spite of that, it is possible to design as ASLO for the {\em active flux} which is defined as
$$
{\bf x}:=\phi-L_q y.
$$
The basic step is carried out in \cite[eq. (21)]{CHOetal} where the authors present a (perturbed) algebraic relation, which is linear in the vector ${\bf x}$. Similarly to \cite{PYRetal}, this relation is {\em scalar}, but as explained above, it is possible to generate another scalar relation via application of the SL or using a {\em different filter} in the SL. The {\em perturbation} is the sum of two terms, denoted $\epsilon(t)$ and $d(t)$ in \cite{CHOetal}, the first is the exponentially convergent contribution of the filters' initial conditions, while $d(t)$ is proven to be small due to practical considerations. Henceforth, both terms can be neglected. One interesting feature of the developments of \cite{CHOetal} is that a {\em new version} of the SL---necessary to carry out the analysis---is established in \cite[Lemma 2]{CHOetal}. In \cite{CHOetal} an ASLO {\em is not derived}, instead a gradient descent estimation algorithm is implemented to asymptotically reconstruct ${\bf x}$. It is worth mentioning that the stability analysis carried out in \cite{CHOetal} imposes a hard constraint on the estimation gain---styimieing the achievement of high performance. This obstacle has been removed in the recent paper \cite{YIetal}, where a new analysis technique based on the creation of a ``virtual" invariant manifold was utilized. It is interesting to note that it also possible to design an ASLO for the scheme reported in  \cite{YIetal}. 
%
%%%%%%%%%%%%%%%%
\section{Mechanical Systems Examples}
\lab{sec4}
%%%%%%%%%%%%%%%%
In this section we apply the ASLO procedure for mechanical systems. We consider the practical scenario where the {\em positions are measurable} and we want to find an ASLO for the coordinate {\em velocities}. First, we provide the solution for the two well-known examples of the Robotic Leg \cite{BULLEWbook} and the Ball-and-Beam (B\&B) \cite{HAUSASKOK} systems. Then, we give a general characterization of a class of mechanical systems amenable for the design of an ASLO. For the two examples we also give an A-ASLO. \vspace{-0.4cm}
\subsection{Robotic Leg system}
\label{sec41}
We consider in this section the Robotic Leg system, which was studied in \cite{BULLEWbook}. It has three DoF $q_1(t)\in \rea$ and $(q_2(t),q_3(t))\in (0,2\pi] \times (0,2\pi]$, and two control forces $u(t) \in \rea^2$.  The inertia and input matrices are of the form 
$$
M(q_1)=\begmat{ m_1 & 0 & 0\\ 0 & m_1 q_1^2 & 0 \\ 0 & 0 & m_2}, \quad G=\begmat{1 & 0 \\ 0 & 1\\ 0 & -1 },
$$
with positive numbers $m_1$ and $m_2$. Besides, there is no potential energy term nor friction matrix. Consequently, the EL equations of motion take the form
\begin{align}
\nonumber
m_1 \ddot q_1 =& m_1 q_1 \dot q_2^2 +u_1,\;m_1 q_1^2 \ddot q_2 =-2 m_1 q_1 \dot q_1 \dot q_2 + u_2, \\
m_2 \ddot q_3 =&-u_2.
\label{rlsys}
\end{align}

As indicated above, it is assumed that the positions $q$ are available for measurement and our objective is to design an ASLO for the corresponding velocities $\dot q$.
\begpro 
\lab{pro4}
Consider the dynamics of the Robotic Leg system given in \eqref{rlsys}. Define the signals\vspace{-0.1cm}
\begali{
\lab{zrl}
z_1&:=e^{2 \ln(q_1)},\;z_2:={u_2 \over m_1 q_1^2}
}
and introduce the {\em dynamic extension}
\begali{
w_1& =p\calf[q_2]+{1 \over \lambda}\calf\Big[z_1 z_2  \calf\Big[{1 \over z_1}\Big]\Big],\;w_2={1 \over m_2 \lambda}\calf[u_2],\\
w_3&=p\calf[q_3],\;w_4={1 \over \lambda}\calf\Big[q_1\Big({w_1 \over z_1 \calf[z_1]}\Big)^2+{u_1 \over m_1}\Big],
\lab{dynrl}
}
with $\calf(p)$ the LTI filter defined in \eqref{fil}.

\noindent {\bf P1} The ASLO for $\dot q$ is given by 
%%%%%%%
\begin{equation*}
{\small
\hspace{-.5cm} \dot q(t)=\col\Big(p\calf[q_1]+ w_4(t),\;  {w_1(t) \over   z_1(t) \calf[z_1(t)]},\;w_2(t)+w_3(t)\Big).
}
\end{equation*}

\noindent {\bf P2} Denote the estimate of $\dot q$ as $\hat \omega$ and define an A-ASLO for $\dot q$ as
\begin{align*}
\hspace{-.6cm} \dot {\hat \omega}= &\col\Big( q_1 \dot q_2^2 +{1 \over m_1} u_1, \; -{2 \over q_1}\dot q_1 \dot q_2 + {1 \over m_1 q_1^2} u_2,\; -{1 \over m_2} u_2\Big) \\
& -\gamma \hat \omega+ \gamma \dot q,
\end{align*}
where  $\dot q$ in the right hand side of the equations above is replaced by its ASLO expressions given in {\bf P1} above and $\gamma>0$ is a tuning gain. Then, the speed observation error $\tilde \omega:=\hat \omega- \dot q$ satisfies $\dot {\tilde \omega} = -\gamma {\tilde \omega}.$
\endpro
\subsection{The ball and beam system}
\lab{subsec41}
In this section we study the B\&B system. It has two DoF,  $q_1(t) \in (-{\ell \over 2},{\ell \over 2})$ and $q_2(t) \in (0,2\pi]$, with $\ell \in \rea_+$ the length of the beam---see \cite{HAUSASKOK} for further details. The inertia matrix is $M(q_1)=\diag\{  1, \ell^2 +q_1^2 \}$. The potential energy is $V(q)=gq_1\sin(q_2)$ and the input matrix is $G=\col(0, 1)$, furthermore we assume there is no friction. The EL equations of motion take the form\vspace{-0.2cm}
\begali{
\ddot q_1&=q_1\dot q_2^2-g \sin(q_2), \nonumber \\
( \ell^2+q^2_1)\ddot q_2&=-2q_1\dot q_1 \dot q_2 -g q_1\cos(q_2)+u
\lab{bbsys}
}
with $u(t) \in \rea$ the control torque.
%%%%%%%%%%%
\begpro 
\lab{pro5}
Consider the dynamics of the B\&B system given in \eqref{bbsys}. Define the following {\em measurable} signals
\begalis{
z_1 &:= {1 \over  \ell^2+q^2_1}\Big[-g q_1\cos(q_2)+u\Big],\;z_2:=e^{\ln(\ell^2+q^2_1)}
}
and introduce the {\em dynamic extension}
 \begalis{
w_1 &:= \calf[z_2^{-1}],\;w_2= \calf[w_1z_1z_2]
} 
\noindent {\bf P1} The ASLO of $\dot q$ is given by
\small{
$$
\dot q(t)=\col\Big( p\calf[q_{1}]+{1 \over \lambda}\calf[ q_{1}\dot q_2^2-g \sin(q_2)],{1 \over w_1 z_2}\Big({1 \over \lambda}w_2+p\calf[q_2]\Big)\Big).
$$
}
%%%%%%
{\bf P2} Denote the estimate of $\dot q$ as $\hat \omega$ and define an A-ASLO for $\dot q$ as
\begin{align*}
\hspace{-1.3cm} \dot {\hat \omega}=& \col\Big(q_1\dot q_2^2-g \sin(q_2),\; \frac{-2q_1\dot q_1 \dot q_2 -g q_1\cos(q_2)+u}{\ell^2+q^2_1} \Big) \\
& -\gamma \hat \omega+ \gamma \dot q,
\end{align*}
where  $\dot q$ in the right hand side of the equations above is replaced by its ASLO expressions given in {\bf P1} above and $\gamma>0$ is a tuning gain. Then, the speed observation error $\tilde \omega:=\hat \omega- \dot q$ satisfies $\dot {\tilde \omega} = -\gamma {\tilde \omega}.$ \vspace{-0.2cm}
\endpro	
\subsection{Identification of a class of mechanical systems amenable for ASLO design}
\label{subsec43}
%%%%%%%%%%%%%%%
We consider general  mechanical systems whose dynamics is described by the well-known EL equations of motion \cite{ORTetal_bookel}
\begin{equation}
\label{EL}
M(q)\ddot q +C(q,\dot q)\dot q + R \dot q  + \nabla V(q)=G(q) u,
\end{equation}
where $q(t) \in \rea^n$ are the configuration variables, $u(t) \in \rea^{m}$ are the control signals,  $M: \rea^n \rightarrow \rea^{n \times n}$, is the positive definite
generalized inertia matrix, $C(q, \dot q) \dot q $ are the Coriolis and centrifugal forces, with $C:  \rea^n \times \rea^n \rightarrow \rea^n \times \rea^n $ defined as 
\begin{equation}
\label{Cor}
C(q, \dot q)= \left[ \nabla_q\left(M(q) \dot q\right) -\frac{1}{2} [\nabla_q \left(M(q) \dot q\right)]^\top \right] \dot q, 
\end{equation}
$V : \rea^n \rightarrow \rea$ is the systems potential energy,  $G: \rea^n \to \rea^{n \times m}$ is the full rank input matrix and $R\in \rea^{n \times n}$, which satisfies $R=R^\top \geq 0$, is the friction matrix.
  
To simplify the definition of the class,  we partition the generalized coordinates as $q=\col (q_1, q_2) $, with $q_1 \in \rea^s$ and $q_2 \in \rea^{m}$ and $s:=n-m$. Similarly, the inertia matrix and input matrix are partitioned as
$$
 M(q)=\left[ \begin{array}{cc} m_1(q)  & m^{\top}_{2}(q) \\ m_{2}(q)  & m_3 (q)   \end{array} \right], \quad  G(q)= \left[ \begin{array}{c} g_{1}(q)   \\ g_2 (q)   \end{array} \right],
$$
where $m_1 : \rea^{n} \rightarrow \rea^{s \times s}$, $m_2 : \rea^n \rightarrow \rea^{m \times s}$,  $m_3 : \rea^n \rightarrow \rea^{m\times m}$, $g_1: \rea^n \rightarrow \rea^{s \times n}$ and $g_2: \rea^n \rightarrow \rea^{m \times n}$. In addition, we assume that $R=\diag \{R_1, R_2\}$ with $R_1\in \rea^{s \times s}$ and $R_2 \in \rea^{m \times m}$ {\em diagonal}, and $u=\col(u_1, u_2)$, with $u_1(t) \in \rea^{s}$ and $u_2(t) \in \rea^{m}$

Using this notation the class is identified imposing two assumptions. The first one, which is quite restrictive, is the following one.
%%%%%%%%%%%%%
\begin{assumption} 
\label{ass2}
The inertia matrix  is {\it diagonal} and depends only on the variables $q_1$. Moreover, the sub-block matrix $m_1$  is {\it constant}. Consequently, $M(q)$ becomes:
$$
M(q_1)= \left[ \begin{array}{cc} m_1  & {\bf 0}_{s \times m}  \\ {\bf 0}_{m \times s}  & m_3 (q_1)   \end{array} \right].
$$
\end{assumption}

We have decided to make the diagonal components of the inertia matrix {\em constant} for the upper part ones, and for those in the lower part, they are {\em dependent on $q_1$ only}. This choice was made to capture the two examples studied before. As will be shown below it would have been possible to swap the roles and make the upper terms dependent on $q_2$ and the lower ones constant. In the first case, we will design our ASLO for $\dot q_2$ first---as in the previous two examples. On the other hand, with the second choice it will be an ASLO for $\dot q_1$ the one designed first.

Under Assumption \ref{ass2}, we have that 
$$
\nabla_q\left(M(q) \dot q\right)= \left[\begin{array}{cc}{\bf 0}_{m \times m}  & {\bf 0}_{m \times s} \\ \nabla_{q_1}\left[m_3(q_1) \dot q_2\right] & {\bf 0}_{s \times s} \end{array} \right],
$$
and the Coriolis forces \eqref{Cor} take the form
\begin{align}
\label{Cqs}
C(q, \dot q) \dot q= \left[\begin{array}{c}-\frac{1}{2} \nabla_{q_1}\left[m_3(q_1) \dot q_2\right]^\top \dot q_2 \\  \nabla_{q_1}\left[m_3(q_1) \dot q_2\right] \dot q_1 \end{array} \right].
\end{align}
Invoking Assumption \ref{ass2} and \eqref{Cqs} the EL model \eqref{EL} is
\begin{align}
\label{EL2}
m_1 \ddot q_1-\frac{1}{2}\nabla_{q_1}  \left[m_3(q_1) \dot q_2\right]^\top \dot q_2 +R_1 \dot q_1+ \nabla_{q_1} V(q)=& g_1(q) u \nonumber \\
m_3(q_1) \ddot q_2 + \nabla_{q_1}\left[m_3(q_1) \dot q_2\right] \dot q_1 +R_2 \dot q_2 +\nabla_{q_2} V(q)= & g_2(q)u.
\end{align}

Exploiting the fact that $m_3$ is a diagonal matrix, and after some straightforward calculations, the $j$-th component of the second term in the right hand side of the second equation of \eqref{EL2}  can be written as:
$$
 \bfe^\top_j\nabla_{q_1}\left[m_3(q_1) \dot q_2\right] \dot q_1=\left[ \sum_{i=1}^{s} \Big( \nabla_{q_{1_i}} m_{3_j}(q_1)  \Big)\dot q_{1_i} \right] \dot q_{2_j},j\in\bar{m}.  
$$
Since $R_2$ is diagonal, the $j$-term of the second equation in \eqref{EL2} takes the form
$$
\ddot q_{2_j}+{1 \over m_{3_j}(q_1)}\left[ \sum_{i=1}^{s} \Big( \nabla_{q_{1_i}} m_{3_j}(q_1)  \Big)\dot q_{1_i} \right] \dot q_{2_j} +\bar r_{2_j} (q_1) \dot q_{2_j} =: z_{0_j}(q),
$$
where we defined
\begalis{
\bar r_{2_j}(q_1) &:={1 \over m_{3_j}(q_1)}  R_{2_j}\\
z_{0_j}(q) &:={1 \over m_{3_j}(q_1)} \bfe^\top_j\Big[ g_2(q)u - \nabla_{q_2} V(q)\Big].
}

To define the class of systems addressed in this note, the following final assumption is required.

\begin{assumption}
\label{ass3}
There exist functions $z_j : \rea^s \rightarrow \rea$, such that
\begin{equation}
\dot z_j(q_1) =\frac{1}{m_{3_j}(q_1)}\left[  \sum_{i=1}^{s}  \Big( \nabla_{q_{1_i}} m_{3_j}(q_1)  \Big)\dot q_{1_i} \right]+ \bar r_{2_j}(q_1),\;j \in \bar{m}.
\end{equation}
Since $z_j(q_1)$ depends solely on $q_1$, it is {\em measurable}. 
\end{assumption} 

Finally, Assumption  \ref{ass3} allows us to express the $m$ terms of the vector $\ddot q_2$ in the following form:
\begequ
\lab{ddotq2} 
\ddot q_{2_j}+\dot z_j(q_1) \dot q_{2_j} = z_{0_j}(q_1), \quad j \in \bar m. 
\endequ
Notice that in both examples above Assumptions \ref{ass2} and \ref{ass3} are satisfied---with $\bar r_{2_j}(q_1) \dot q_{2_j}=0$. The latter can be verified comparing \eqref{ddotq2} with \eqref{for1} and \eqref{for2}.\\

Once the acceleration vector $\ddot q_2$ is expressed in the form \eqref{ddotq2}  the design of the ASLO for $\dot q$ follows the construction done in the proofs of Proposition \ref{pro4} and \ref{pro5}, and is omitted for brevity. 
%
%%%%%%%%%%%%%%%
\section{Conclusions and Future Work}
\lab{sec5}
%%%%%%%%%%%%%%
%
We presented in this paper a radically new state observer design where the objective is to obtain an {\em algebraic} relation between the observed state and nonlinear combinations of the systems inputs and outputs and their filtered versions. The main requirement is that this algebraic relation should hold true for all $t \geq 0$---which should be contrasted with standard asymptotic (or fixed/finite time) observers. The methodology is applicable to nonlinear systems where the {\em derivative} of some of the states is available for measurement, as it is often the case in many physical systems. The key technical tool is the SL that ``extracts" (via filtering) from the product of two signals, the derivative of one of them. The applicability of the method is illustrated with the physical examples of motors and mechanical systems, yielding radically new, high performance, observer designs, even in the case of LTI systems.  

The introduction of this new ASLO opens up a wide research avenue for the state observation field with many interesting questions to be explored, among them:  

\bul Assessment of the {\em robustness} properties of ASLO---comparing with classical asymptotically convergent versions. 

\bul It is argued in the paper that, in contrast with standard observer designs, ASLO does not require an explicit assumption on {\em observability}. However, some operations on the construction of ASLO---for instance the division by the signal $\Delta$ in the motor examples---may create numerical problems due to the appearance of singularities. A better understanding of this phenomenon is in dire need.    

\bul Exploration of its applicabilty to {\em other physical systems}. It should be underscored that we have recently succesfully applied it to {\em active magnetic bearing} systems \cite{MASbook}---a result that will be reported shortly.

\bul Evaluate the effect in the observers performance of choosing different filter time-constants, {\i.e.},  {\em different} $\lambda$.

\bul Replace the SL by the use of {\em different filters} to generate the required algebraic equations---as done in \cite{PYRetal} in the context of adaptive observer design.  

\bul In the spirit of the robustness to constant output disturbances indicated in Observation {\bf O5} in Subsection\ref{subsec25}, use that fact that the SL is valid for any LTI filter to investigate the possibilty of rejecting disturbances with known {\em internal model} \cite{ISIbook}: for instance $(p^2 + \omega^2)[\sin(\omega t)]=0$.

\bul Study the possibility of replacing the classical SL with the {\em nonlinear version} reported in \cite{KRSKOK}.
%
%%%%%%%%%%%%

%%%%%%%%%%%%%%%%%%
\appendix

%

%%%%%%%%%%%%%%
\section{Swapping Lemma \cite[Lemma 3.6.5]{SASBODbook}}
\lab{appa}
%%%%%%%%%%%%%%%
%
Consider the stable LTI filter \eqref{fil}. Given the signals $w:\rea_+ \to \rea^{n \times m}$ and $v:\rea_+ \to \rea^m$, with $v(t)$ differentiable, the following identity holds true\footnote{For ease of presentation we have selected the basic filter \eqref{fil}. However, as shown in \cite[Lemma 3.6.5]{SASBODbook} the lemma holds true of an arbitrary LTI system. See also the footnote in the observation {\bf O3} of Subsection \ref{subsec23}.} 
$$
\calf[w v]= \calf[w] v-  \calf\Big[ {1 \over \lambda}\calf [w]\dot v\Big].
$$
\vspace{-0.8cm}
%
%
%%%%%%%%%%%%%%%%%%%%
\section{Proof of Proposition \ref{pro2}}
\lab{appb}
%%%%%%%%%%%%%%%%%%%
% 
The main objective of the proof is to establish a linear algebraic relation of the form
\begequ
\lab{algsys}
\cala(t) \phi(t) = \calb(t),
\endequ
where $\cala(t) \in \rea^{2 \times 2}$ and $\calb(t) \in \rea^2$ are measurable matrices, with $\cala(t)$ invertible, this will be possible via succesive applications of the well-known SL given in \cite[Lemma 3.6.5]{SASBODbook}, and recalled in Appendix \ref{appa} for ease of reference.

First, notice that using the well-known fact that 
$$
\cos^2(n_p x_3)+\sin^2(n_p x_3)=1,
$$
we get from \eqref{flucur} an alternative algebraic equation, which is the main step to design an {\it algebraic} observer
$$
|Ly- \phi|^2=\lambda_m^2.
$$
Expanding it, we get 
$$
L^2 y_1^2 -2 L y_1 \phi_1 + \phi_1^2+L^2 y_2^2 -2L y_2 \phi_2 + \phi_2^2=\lambda_m^2,
$$
which can be written as
\begin{equation}
	y_1  \phi_1 + y_2  \phi_2 =\frac{1}{2L}|\phi|^2 +z_3
	\label{y12}
\end{equation}
with the measurable signal $z_3$ defined in \eqref{z}. Applying the SL to each of the terms of \eqref{y12} we obtain
\begalis{
	\calf[y_1  \phi_1] &=\calf[ \phi_1 y_1]= \phi_1\calf[y_1]-{1 \over \lambda}\calf[z_1\calf[y_1]]\\
	\calf[y_2  \phi_2] &=\calf[ \phi_2 y_2]= \phi_2\calf[y_2]-{1 \over \lambda}\calf[z_2\calf[y_2]]\\
		\calf[1 \times |\phi|^2]&=\calf[1 \times  \phi^2_1] + \calf[1 \times \phi^2_2]
}
\begalis{
		&= \phi^2_1\calf[1]-{1 \over \lambda}\calf[(2 \phi_1 \dot  \phi_1)\calf[1]] + \phi^2_2\calf[1]-{1 \over \lambda}\calf[(2 \phi_2 \dot  \phi_2)\calf[1]] \\
		&=  |\phi|^2 - {2 \over \lambda}\Big(\calf[ \phi_1z_1]  + \calf[ \phi_2 z_2]\Big) \\
		& =  |\phi|^2 -{2 \over \lambda}\Big( \phi_1\calf[z_1]-{1 \over \lambda}\calf[z_1\calf[z_1]] + \phi_2\calf[z_2]-{1 \over \lambda}\calf[z_2\calf[z_2]]\Big).
		}
Thus after applying filter $\calf$ to \eqref{y12}, and using the equations above, we obtain
\begali{
\nonumber
		&  \phi_1\calf[y_1]-{1 \over \lambda}\calf[z_1\calf[y_1]]  + \phi_2\calf[y_2]-{1 \over \lambda}\calf[z_2\calf[y_2]]\\
\lab{fff}
		&=\frac{1}{2L} \Big[ |\phi|^2 -{2 \over \lambda}\Big( \phi_1\calf[z_1]-{1 \over \lambda}\calf[z_1\calf[z_1]]+ \phi_2\calf[z_2] \nonumber \\
		&\;\;-{1 \over \lambda}\calf[z_2\calf[z_2]]\Big) \Big]+\calf[z_3].
}

To simplify the notation let us define variables:
\begalis{
	w_1 &:= \calf[y_1]+{1 \over \lambda L}\calf[z_1] \\
	w_2 &:= \calf[y_2]+{1 \over \lambda L}\calf[z_2] \\ 
		w_3 &:=  {1 \over \lambda }\Big[\calf[z_1\calf[y_1]]+\calf[z_2\calf[y_2]] \\
		&\;\;\;+{1 \over \lambda L}\Big(\calf[z_1\calf[z_1]]+\calf[z_2\calf[z_2]]\Big)\Big] +\calf[z_3].
}
Notice that these three signals, $w_1, w_2$ and $w_3$, are {\em measurable}.

Using the definitions above in \eqref{fff} we obtain
\begin{equation}
	\label{y3}
 \phi_1 w_1+ \phi_2 w_2= \frac{1}{2L}  |\phi|^2+w_3.
\end{equation}
Now, we can write the system of two equations \eqref{y12} and \eqref{y3} as \vspace{-0.5cm}
\begali{
\nonumber
	y_1  \phi_1+y_2  \phi_2 = \frac{1}{2L} |\phi|^2  + z_3\\
\lab{algeqspmsm}
	w_1  \phi_1 + w_2 \phi_2 = \frac{1}{2L}  |\phi|^2 +w_3 
}
After subtraction we obtain: 
\begequ
\lab{w6}
  w_4  \phi_1 +w_5  \phi_2 = w_6 
\endequ
where we defined 
$$
w_4:= w_1 - y_1,\;w_5:= w_2-y_2,\;w_6= w_3-z_3
$$
which are, clearly, {\em measurable}.

Now we can apply filter $\calf$ to the equation above:
$$
\calf[w_6]=\calf[ \phi_1w_4]+\calf[ \phi_2w_5]
$$
Applying the SL we can rewrite this equation as
$$
\calf[w_6]= \phi_1\calf[w_4]-{1 \over \lambda}\calf[z_1\calf[w_4]]+ \phi_2\calf[w_5]-{1 \over \lambda}\calf[z_2\calf[w_5]],
$$ 
which can be compactly rewritten as
\begin{equation}
\lab{w9}
  w_7\phi_1 +  w_8\phi_2 =w_9
\end{equation}
where we defined
\begalis{
w_7 &:=\calf[w_4],\;w_8 :=\calf[w_5] \\
w_9&:=\calf[w_6]+{1 \over \lambda}\calf[z_1\calf[w_4]] +{1 \over \lambda}\calf[z_2\calf[w_5]].
}
With the definitions
$$
\cala(t):= \begmat{w_4 & w_5 \\ w_7 & w_8},\;\calb(t) :=\begmat{ w_6 \\ w_9}.
$$
the equations \eqref{w6} and \eqref{w9} constitute the algebraic system \eqref{algsys} we are looking for. 

Under the {\em generic assumption} that the matrix determinant 
$$
\Delta(t)=w_4(t)w_8(t)-w_5(t)w_7(t) \neq 0,\;\forall t \geq 0,
$$
we can solve this system to finally get the desired ASLO:
\begin{equation}
	\label{y56}
\begmat{\phi_1\\ \phi_2}={1 \over \Delta} \begmat{w_6w_8-w_5w_9 \\ w_4w_9 -w_6w_7},
\end{equation}
proving claim {\bf C1}.

Finally, to prove the claim {\bf C2} notice that, according to the definition of $z$ in \eqref{z}, we have that $(z_1,z_2)=\dot \phi$. Replacing this in  \eqref{asyobsphi} yields \eqref{errequphi}.
\qed
%
%%%%%%%%%%%%%%%%%%%%
\section{Proof of Proposition \ref{pro3}}
\lab{appc}
%%%%%%%%%%%%%%%%%%%
% 
Similarly to the proof of Proposition \ref{pro2}, the main objective of the proof is to establish, in this case {\em two}, linear algebraic relations of the form
\begequ
\lab{algsys1}
\cala_\kappa(t) \phi_\kappa(t) = \calb_\kappa(t),\; \kappa \in \{s,r\},
\endequ
where $\cala_\kappa(t) \in \rea^{2 \times 2}$ and $\calb_\kappa(t) \in \rea^2$ are {\em measurable} matrices, with $\cala_\kappa(t)$ {\em invertible}. 

We note that \eqref{lambda} can be decomposed as follows
 \begalis{
 \phi_s-  L_s i_s & = L_{sr} {\bf e}^{{\mathcal J}\theta} i_r \\
  \phi_r-  L_r i_r  &= L_{sr} {\bf e}^{-{\mathcal J}\theta}i_s. 
 } 
Hence, we have that
 \begali{
 \nonumber
 |\phi_s-  L_s i_s|^2 & = L^2_{sr} |i_r|^2 \\
 \lab{phisqu0}
  |\phi_r-  L_r i_r|^2  &= L^2_{sr} |i_s|^2, 
 }
where we used the fact that
$$
\Big({\bf e}^{{\mathcal J}\theta}\Big)^\top {\bf e}^{{\mathcal J}\theta}={\bf I}_2.
$$
Expanding the left hand side of both equations in \eqref{phisqu0} and rearranging terms we get that
\begin{align}
\lab{phiwrims}
\phi^\top_\kappa y_\kappa & ={1 \over 2 L_{\kappa}} |\phi_\kappa|^2+z_{\kappa_3}, 
\end{align}
where we use the notation \eqref{ywrim} and defined the {\it measurable} signals 
\begalis{
z_{s_3}:= \frac{L_s}{2}( i^2_{s_1} +    i^2_{s_2})  -\frac{L^2_{sr}}{2L_s} (i^2_{r_1} + i^2_{r_2}) \\
z_{r_3}= \frac{L_r}{2}( i^2_{r_1} +    i^2_{r_2})  -\frac{L^2_{sr}}{2L_r} (i^2_{s_1} + i^2_{s_2}). 
}
Now, we apply the SL to each of the terms in \eqref{phiwrims}.
\begalis{
	\calf[y_{\kappa_1}  \phi_{\kappa_1}] &=\calf[ \phi_{\kappa_1} y_{\kappa_1}]= \phi_{\kappa_1}\calf[y_{\kappa_1}]-{1 \over \lambda}\calf[z_{\kappa_1}\calf[y_{\kappa_1}]]\\
	\calf[y_{\kappa_2}  \phi_{\kappa_2}] &=\calf[ \phi_{\kappa_2} y_{\kappa_2}]= \phi_2\calf[y_{\kappa_2}]-{1 \over \lambda}\calf[z_{\kappa_2}\calf[y_{\kappa_2}]]
}
\begalis{
		&\calf[1 \times (\phi^2_{\kappa_1} +  \phi^2_{\kappa_2})]=\calf[1 \times ( \phi^2_{\kappa_1})] + \calf[1 \times ( \phi^2_{\kappa_2})]  \\
		&= \phi^2_{\kappa_1}\calf[1]-{1 \over \lambda}\calf[(2 \phi_{\kappa_1} \dot  \phi_{\kappa_1})\calf[1]]+ \phi^2_{\kappa_2}\calf[1]-{1 \over \lambda}\calf[(2 \phi_{\kappa_2} \dot  \phi_{\kappa_2})\calf[1]] \\
		&= |\phi_{\kappa}|^2 - {2 \over \lambda}\Big(\calf[ \phi_{\kappa_1}z_{\kappa_1}]  + \calf[ \phi_{\kappa_2} z_{\kappa_2}]\Big)
}
\small{
$$
 = |\phi_{\kappa}|^2  -{2 \over \lambda}\Big( \phi_{\kappa_1}\calf[z_{\kappa_1}]-{1 \over \lambda}\calf[z_{\kappa_1}\calf[z_{\kappa_1}]]+ \phi_{\kappa_2}\calf[z_{\kappa_2}]-{1 \over \lambda}\calf[z_{\kappa_2}\calf[z_{\kappa_2}]]\Big).
$$
}
Thus after applying filter $\calf$ to \eqref{phiwrims}, and using the equations above, we obtain
\small{
\begali{
\nonumber
		&  \phi_{\kappa_1}\calf[y_{\kappa_1}]-{1 \over \lambda}\calf[z_{\kappa_1}\calf[y_{\kappa_1}]]  + \phi_{\kappa_2}\calf[y_{\kappa_2}]-{1 \over \lambda}\calf[z_{\kappa_2}\calf[y_{\kappa_2}]]\\
\nonumber
		&=\frac{1}{2L_s} |\phi_{\kappa}|^2 -{2 \over \lambda}\Big( \phi_{\kappa_1}\calf[z_{\kappa_1}]-{1 \over \lambda}\calf[z_{\kappa_1}\calf[z_{\kappa_1}]]+ \phi_{\kappa_2}\calf[z_{\kappa_2}]\Big)\\
		& -{2 \over \lambda}\Big( -{1 \over \lambda}\calf[z_{\kappa_2}\calf[z_{\kappa_2}]]\Big) \Big]+\calf[z_{\kappa_3}]\Big).
\lab{fffs}
}
}
To simplify the notation let us define variables:
\begalis{
	w_{\kappa_1} &:= \calf[y_{\kappa_1}]+{1 \over \lambda L_s}\calf[z_{\kappa_1}] \\
	w_{\kappa_2} &:= \calf[y_{\kappa_2}]+{1 \over \lambda L_s}\calf[z_{\kappa_2}] \\ 
		w_{\kappa_3} &:=  {1 \over \lambda }\Big[\calf[z_{\kappa_1}\calf[y_{\kappa_1}]]+\calf[z_{\kappa_2}\calf[y_{\kappa_2}]]+\\
		&+{1 \over \lambda L_s}\Big(\calf[z_{\kappa_1}\calf[z_{\kappa_1}]]+\calf[z_{\kappa_2}\calf[z_{\kappa_2}]]\Big)\Big] +\calf[z_{\kappa_3}].
}
Notice that these three signals, $w_{\kappa_1}, w_{\kappa_2}$ and $w_{\kappa_3}$, are {\em measurable}.

Using the definitions above in \eqref{fffs} we obtain
\begequ
\lab{ppp}
 \phi_{\kappa_1} w_{\kappa_1}+ \phi_{\kappa_2} w_{\kappa_2}= \frac{1}{2L_s} ( \phi_{\kappa_1}^2 + \phi_{\kappa_2}^2)+w_{\kappa_3}.
\endequ
Now, we can write the system of two equations \eqref{phiwrims} and \eqref{ppp} as
\begalis{
	y_{\kappa_1}  \phi_{\kappa_1}+y_{\kappa_2}  \phi_{\kappa_2} &= \frac{1}{2L_s} |\phi_{\kappa}|^2  + z_{\kappa_3}\\
	w_{\kappa_1}  \phi_{\kappa_1} + w_{\kappa_2} \phi_{\kappa_2} &= \frac{1}{2L_s}  |\phi_{\kappa}|^2 +w_{\kappa_3}.
}
After subtraction we obtain: 
\begequ
\lab{ws6}
  w_{\kappa_4}  \phi_{\kappa_1} +w_{\kappa_5}  \phi_{\kappa_2} = w_{\kappa_6} 
\endequ
where we defined 
$$
w_{\kappa_4}:= w_{\kappa_1} - y_{\kappa_1},\;w_{\kappa_5}:= w_{\kappa_2}-y_{\kappa_2},\;w_{\kappa_6}:= w_{\kappa_3}-z_{\kappa_3}
$$
which are, clearly, {\em measurable}.

Now we can apply filter $\calf$ to the equation above:
$$
\calf[w_{\kappa_6}]=\calf[ \phi_{\kappa_1}w_{\kappa_4}]+\calf[ \phi_{\kappa_2}w_{\kappa_5}]
$$
Applying the SL we can rewrite this equation as
$$
\calf[w_{\kappa_6}]= \phi_1\calf[w_{\kappa_4}]-{1 \over \lambda}\calf[z_{\kappa_1}\calf[w_{\kappa_4}]]+ \phi_{\kappa_2}\calf[w_{\kappa_5}]-{1 \over \lambda}\calf[z_{\kappa_2}\calf[w_{\kappa_5}]],
$$ 
which can be compactly rewritten as
\begin{equation}
\lab{ws9}
  w_{\kappa_7}\phi_1 +  w_{\kappa_8}\phi_2 =w_{\kappa_9}
\end{equation}
where we defined
\begalis{
w_{\kappa_7} &:=\calf[w_{\kappa_4}],\;w_{\kappa_8} :=\calf[w_{\kappa_5}] \\
w_{\kappa_9}&:=\calf[w_{\kappa_6}]+{1 \over \lambda}\calf[z_{\kappa_1}\calf[w_{\kappa_4}]] +{1 \over \lambda}\calf[z_{\kappa_2}\calf[w_{\kappa_5}]].
}
With the definitions
$$
\cala_\kappa(t):= \begmat{w_{\kappa_4} & w_{\kappa_5} \\ w_{\kappa_7} & w_{\kappa_8}},\;\calb_\kappa(t) :=\begmat{ w_{\kappa_6} \\ w_{\kappa_9}}.
$$
the equations \eqref{ws6} and \eqref{ws9} constitute the algebraic system of \eqref{algsys1} we are looking for. 

Under the {\em generic assumption} that the matrix determinant satisfies
$$
\Delta_\kappa(t)=w_{\kappa_4}(t)w_{\kappa_8}(t)-w_{\kappa_5}(t)w_{\kappa_7}(t) \neq 0,\;\forall t \geq 0,
$$
we can solve this system to finally get the ASLO \eqref{www}.
\qed
%
%%%%%%%%%%%%%%%%%%%%
\section{Proof of Proposition \ref{pro4}}
\lab{appd}
%%%%%%%%%%%%%%%%%%%
%
\noindent {\bf ASLO for $\dot q_1$:} Given the identity
$$
\calf[\dot q_1]=\dot q_1-{1 \over \lambda}\calf[\ddot q_1],
$$
we obtain that
\begalis{
\dot q_1 &=\calf[\dot q_1]+{1 \over \lambda}\calf[\ddot q_1]\\
&=p\calf[q_1]+{1 \over \lambda}\calf[q_1 \dot q_2^2 +{u_1 \over m_1} ]\\
&=p\calf[q_1]+{1 \over \lambda}\calf\Big[q_1\Big({w_1 \over z_1 \calf[z_1]}\Big)^2+{u_1 \over m_1}\Big].
}
The second equation is obtained replacing $\dot q_2$ by its ALSO and the proof is completed replacing the second right hand term by $w_4$.\\

\noindent {\bf ASLO for $\dot q_2$:} First, we write from \eqref{rlsys} $\ddot q_2$ as
\begali{
 \ddot q_2 & =-{2 \over q_1} \dot q_1 \dot q_2 + {1 \over m_1 q_1^2}u_2, \; =-2\dot z_3 \dot q_2+z_2,
\lab{for1}
}
where we used the definition of $z_2$ in \eqref{zrl} and defined the measurable signal $z_3:=\ln(q_1).$

Define now the signal 
\begequ
\lab{chirl}
\chi:=z_1 \dot q_2,
\endequ 
whose derivative takes the form
\begali{
\dot \chi &= \dot z_1 \dot q_2 + z_1 \ddot q_2= \dot z_1 \dot q_2 + z_1(-2\dot z_3 \dot q_2+z_2) \nonumber \\
&= 2 z_1 \dot z_3\dot q_2-2z_1\dot z_3 \dot q_2+z_1 z_2= z_1z_2
\lab{dotchirl}
}
which is clearly measurable. Now, applying the filter $\calf$ to $\dot q_2={1 \over z_1}\chi$---defined in \eqref{chirl}---yields
\begalis{
p\calf[q_2]&=\calf\Big[\chi{1 \over z_1}\Big]\\
&=\chi\calf\Big[{1 \over z_1}\Big]-{1 \over \lambda}\calf \Big[\dot \chi \calf \Big[{1 \over z_1}\Big]\Big]\\
&=z_1 \dot q_2\calf\Big[{1 \over z_1}\Big]-{1 \over \lambda}\calf \Big[z_1z_2 \calf \Big[{1 \over z_1}\Big]\Big]
}
where we applied the SL to get the second identity and, for the third one, used the expressions for $\chi$ and $\dot \chi$ in \eqref{chirl} and \eqref{dotchirl}, respectively. The proof is completed rearranging terms to pull out the term $\dot q_2$. \\

\noindent {\bf ASLO for $\dot q_3$:}
%%%%%%%%%
%
Define the identity
$$
\ddot q_3=p[\dot q_3]+\lambda \dot q_3-\lambda \dot q_3,
$$
from which we get
$$
(p+\lambda)[\dot q_3]=p[\dot q_3]+\lambda \dot q_3.
$$
Dividing by $(p+\lambda)$ we get
\begalis{
\dot q_3 &={1 \over p+\lambda}[\ddot q_3]+\calf[\dot q_3] ={1 \over m_2}{1 \over p+\lambda}[u]+p\calf[q_3],
}
 completing the proof.
 %
%%%%%%%%%%%%%%%%%%%%
\section{Proof of Proposition \ref{pro5}}
\lab{appe}
%%%%%%%%%%%%%%%%%%%
%
\noindent {\bf ASLO for $\dot q_2$:} To simplify the derivation of the proof we first present the ASLO for $\dot q_2$. To streamline the proof we rewrite the equation for $\ddot q_2$ as follows
\begequ
\lab{for2}
\ddot q_2=-\dot z_3 \dot q_2+z_1,
\endequ
where we defined the measurable function $z_3:=\ln(\ell^2+q_1^2).$ Define now the function 
\begequ
\lab{Psi}
\Psi:=z_2 \dot q_2,
\endequ 
whose derivative satisfies
\begali{
\nonumber
\dot \Psi&=\dot z_2 \dot q_2+z_2 \ddot q_2=\dot z_2 \dot q_2+z_2(-\dot z_3 \dot q_2+z_1)\\
&=z_2 \dot z_3 \dot q_2+z_2(-\dot z_3 \dot q_2+z_1)=z_1 z_2
\lab{z1z2}
}
where, to obtain the second identity, we used the fact that $\dot z_2=z_2 \dot z_3$.

Now, from \eqref{Psi} we have that $\dot q_2={1 \over z_2} \Psi$, to which we apply the SL to get
\begalis{
\calf[\dot q_2]&=\calf[ \Psi{1 \over z_2}]=\Psi\calf\Big[{1 \over z_2}\Big]-{1 \over \lambda} \calf\Big[\dot \Psi \calf\Big[{1 \over z_2}\Big]\Big]=p\calf[q_2].
}
From the last identity we obtain
\begalis{
\Psi &={1 \over \calf\Big[{1 \over z_2}\Big]}\Big(p\calf[q_2]+{1 \over \lambda} \calf\Big[\dot \Psi \calf\Big[{1 \over z_2}\Big]\Big]  \Big)\\
 &={1 \over \calf\Big[{1 \over z_2}\Big]}\Big(p\calf[q_2]+{1 \over \lambda} \calf\Big[z_1z_2\calf\Big[{1 \over z_2}\Big]\Big]  \Big)=z_2 \dot q_2.
}
where we used \eqref{z1z2} to get the second equation. The proof is completed dividing the right hand side of the second equation above by $z_2$ and using the definitions given in Proposition \ref{pro5}.\\

\noindent {\bf ASLO for $\dot q_1$:} In the following derivations we assume that $\dot q_2$ is {\em known}, as it is given by the ASLO of Proposition \ref{pro5}. We find convenient to write the dynamics of $\ddot q_1$ in a state-space form. For, we define
\begalis{
\dot q_{11}&=q_{12},\;\dot q_{12}=q_{11}\dot q_2^2-g \sin(q_2),\;q_1=q_{11}.
} 
Proceeding from the identity
\begalis{
\calf[q_{12}]&=q_{12}-{1 \over \lambda}p\calf[q_{12}]=q_{12}-{1 \over \lambda}\calf[ q_{11}\dot q_2^2-g \sin(q_2)],
}
we get that
\begalis{
q_{12}
&=\calf[q_{12}]+{1 \over \lambda}\calf[ q_{11}\dot q_2^2-g \sin(q_2)]\\
&= p\calf[q_{1}]+{1 \over \lambda}\calf[ q_{1}\dot q_2^2-g \sin(q_2)].
}
The proof is completed noting that $q_{12}=\dot q_1$.


\begin{thebibliography}{00}
%
\bibitem{ARAetal_tac17}
S. Aranovskiy, A. Bobtsov, R. Ortega and A. Pyrkin, Performance enhancement of parameter estimators via dynamic regressor extension and mixing,  \TAC, vol. 62, pp. 3546-3550, 2017. 

\bibitem{BERbook}
P. Bernard, {\em Observer Design for Nonlinear Systems}, Springer, Switzerland, 2019.
	
\bibitem{BERANDAST}
P. Bernard, V. Andrieu, and D. Astolfi, Observer design for continuous-time dynamical systems, {\em Annual Reviews in Control}, pp. 224-248, 2022. 

\bibitem{ABAO2Automatica}
Bobtsov, A.A., Pyrkin, A.A., Ortega, R., Vukosavic, S.N., Stankovic, A.M., Panteley, E.V.: `A robust globally convergent position observer for the permanent magnet synchronous motor', \emph{Automatica},  2015, \textbf{61}, pp.~47-54.

\bibitem{data}
BMP Synchronous motor, Motor manual, V1.00, 12.2012, {\it Schneider Electric}, 2012.

\bibitem{BOBPYRORT}
A. Bobtsov, A. Pyrkin and R. Ortega. A new approach for  estimation of electrical parameters and flux observation of permanent magnet synchronous motors, {\it Int. J. on Adaptive Control and Signal Processing}, Vol, 30, No. 8-10, pp. 1434-1448, August-October, 2016.

\bibitem{BOBetal_ijc21} 
A. Bobtsov, R. Ortega, B. Yi and  N. Nikolayev,  Adaptive state estimation of state-affine systems with unknown time-varying parameters, \IJC, vol. 95, No. 9, pp.2460-2472,   2021.

\bibitem{BULLEWbook}
F. Bullo and A. Lewis, {\em Geometric Control of Mechanical Systems}, Springer Science-Bussiness Media, New York, 2005.

\bibitem{CHOetal}  
J. Choi, K. Nam, A. Bobtsov and R.  Ortega, Sensorless control of IPMSM based on regression model, \TPE, vol. 34, no. 9, pp. 9191-9201, 2019.

\bibitem{EFIPOL}
D. Efimov and A. Polyakov, Finite-time stability tools for control and estimation, {\em Annual Reviews in Control}, vol. 50, pp. 254-271, 2020.

\bibitem{HAUSASKOK}
J. Hauser, S. Sastry, and P. V. Kokotovi\'c, Nonlinear control via approximate input-output linearization: the ball and beam example, \TAC, IEEE vol. 37, no. 3, pp. 392-398, 1992.

\bibitem{ISIbook}
A. Isidori, {\em Lectures in Feedback Design for Multivariable Systems}, Springer, 2017.

\bibitem{KRAbook}
P. C. Krause, {\em Analysis of Electric Machinery}, New York: McGraw Hill, 1986.

\bibitem{KRSKOK}
M. Krsti\'c and P. V. Kokotovi\'c, Adaptive Nonlinear Design with Controller-Identifier Separation and Swapping, \TAC, vol 40, no. 3, pp. 426-440, 1995.

\bibitem{LEEetal} 
J. Lee, J. Hong, K. Nam, R. Ortega, A. Astolfi and L. Praly, Sensorless control of surface-mount permanent magnet synchronous motors based
on a nonlinear observer, {\em IEEE Trans. Power Electron.},  vol. 25, no. 2, pp. 290-297, 2010.

\bibitem{LIUetal}
X. Liu, G. Verghese, J. Lang and M. \"Onder, Generalizing the Blondel- Park Transformation of Electrical Machines: Necessary and Sufficient Conditions, \TCS, vol. 36, no. 8. pp. 1058-1067, 1989.

\bibitem{MARTOMVERbook}
R. Marino, P. Tomei and C. Verrelli, {\em Induction Motor Control Design} Berlin, Germany: Springer Science \& Business Media; 2010.

\bibitem{MASbook}
E. H. Maslen,  Self-sensing magnetic bearings, in G. Schweitzer and E.H. Maslen (Eds.), {\em Magnetic Bearings: Theory, Design, and Application to Rotating Machinery}, Springer, Berlin/Heidelberg, Chapter 15, pp. 435-459.

\bibitem{MOR}
A. S. Morse, Global stability of parameter-adaptive control systems, \TAC,  vol. 25, no. 3, pp. 433-439 1980.

\bibitem{NAMbook}
K.H. Nam, {\em AC Motor Control and Electric Vehicle Application}, CRC Press, 2010.

\bibitem{ORTetal_bookel} 
R. Ortega, A. Loria, P. J. Nicklasson and H. Sira--Ramirez, {\em Passivity--Based Control of  Euler--Lagrange Systems},  Springer-Verlag, Berlin, Communications and Control Engineering, 1998.

\bibitem{ORTetal}
R. Ortega, L. Praly, A. Astolfi, J. Lee and K. Nam, Estimation of rotor position and speed of permanent magnet synchronous motors with guaranteed stability, {\em IEEE Transaction on Control Systems Technology}, Vol. 19, No. 3, pp. 601-614, 2011

\bibitem{ORTetal_scl25} 
R. Ortega, B. Yi, A. Bobtsov,  A. Pyrkin and M. Sinetova, Adaptive state observers for a class of nonlinear systems: The non-triangular and non-state-affine case, \SCL, Vol. 196, 106020, 2025.

\bibitem{PYRetal}  
A. Pyrkin, A. Bobtsov, R. Ortega, A. Vedyakov and S. Aranovskiy, Adaptive state observer design using dynamic regressor extension and mixing, \SCL, Vol. 133, pp. 1-8, 2019.

\bibitem{RUGbook}
W.J. Rugh, {\em Linear Systems Theory}, 2nd Edition, Prentice hall, NJ, 1996.

\bibitem{SASBODbook}
S. Sastry and M. Bodson, {\em Adaptive Control: Stability, Convergence and Robustness}, Prentice-Hall, New Jersey, 1989.

\bibitem{VANbook}
A.~van~der Schaft, {\em $L_2$-Gain and Passivity Techniques in Nonlinear Control}, Springer, Berlin, 3rd Edition, 2016.

\bibitem{YIetal} 
B. Yi, R. Ortega, J. Choi and K. Nam, A high performance globally exponentially stable sensorless observer for the IPMSM: Theoretical and experimental results, \AUT, 174, {\tt (AUT-112138)}, 2025.

\end{thebibliography}
\end{document}